\newtheorem{theorem}{Theorem}[section]
\newtheorem{proposition}[theorem]{Proposition}
\newtheorem{lemma}[theorem]{Lemma}
\newtheorem{corollary}[theorem]{Corollary}
\newtheorem{remark}[theorem]{Remark}
\newtheorem{definition}[theorem]{Definition}
\newtheorem{example}[theorem]{Example}
\newtheorem{assumption}[theorem]{Assumption}
\newcommand{\ud}{\mathrm{d}}
\newcommand{\eqdefr}{=\mathrel{\mathop:}}
\newcommand{\eqdefl}{\mathrel{\mathop:}=}
\newcommand{\Q}{\mathbb{Q}}
\newcommand{\FF}{\mathcal{F}}
\newcommand{\f}{\mathbb{F}}
\newcommand{\A}{\mathcal{A}}
\newcommand{\B}{\mathcal{B}}
\newcommand{\M}{\mathcal{M}}
\newcommand{\K}{\mathcal{K}}
\newcommand{\C}{\mathcal{C}}
\newcommand{\s}{\hat{s}}
\newcommand{\tha}{\hat{\theta}}
\newcommand{\LL}{\mathcal{L}}
\newcommand{\R}{\mathbb{R}}
\newcommand{\1}{\textbf{1}}
\newcommand{\N}{\mathbb{N}}
\newcommand{\PP}{\mathbb{P}}
\newcommand{\Cov}{\mathbf{Cov}}
\newcommand{\Var}{\mathbf{Var}}
\newcommand{\Corr}{\mathbf{Corr}}
\newcommand{\ackname}{Acknowledgements}
\newcommand{\e}{\varepsilon}
\DeclareMathOperator{\inter}{int}
\DeclareMathOperator{\conv}{conv}
\DeclareMathOperator{\supp}{supp}
\DeclareMathOperator{\relint}{ri}
\DeclareMathOperator{\interior}{int}
\newcommand{\F}{\mathscr{F}}
\author{\normalsize{$\text{Hasanjan Sayit}$}  \\
\footnotesize{ $^2\text{Xi'Jiao Liverpool University, Suzhou, China}$ }}
\date{May 7, 2024}
\begin{document}
\title{Weak convergence implies convergence in mean within GGC}
\date{\today}

\maketitle



\abstract{We prove that weak convergence within generalized gamma convolution (GGC) distributions implies convergence in the mean value. We use this fact to show the robustness of the expected utility maximizing optimal portfolio 
under exponential utility function when return vectors are modelled by hyperbolic distributions.
}

\vspace{0.1in}
\textbf{Keywords:} GGC; Weak convergence;  
Expected utility; Mean-variance mixture models.
\vspace{0.1in}

\textbf{JEL Classification:} G11
\vspace{0.1in}

\section{Introduction}
The paper \cite{rasonyi-hasan} gives closed form expressions for the expected utility maximizing optimal portfolios when the returns vector of risky assets follow hyperbolic distributions. The market model considered in this paper contains $d+1$ assets with one risk-free asset with interest rate $r_f$ and $d$ risky assets with return vector given by 
\begin{equation}\label{one}
X\overset{d}{=}\mu+\gamma Z+\sqrt{Z}AN,
\end{equation}
where $\mu \in \R^d$ is location parameter, $\gamma \in \R^d$ controls the skewness, $Z\sim G$ is a non-negative mixing random variable,  $A\in \R^{d\times d}$ is a symmetric and positive definite  $d\times d$ matrix of real numbers, and $N\sim N(0, I)$ is a $d-$dimensional  Gaussian random vector with identity co-variance matrix $I$ in $\R^d\times \R^d$ and $N$ is independent from the mixing distribution $Z$. 

The mixing distribution $Z$ in the model (\ref{one}) can be any non-negative random variable. If $Z$ is a non-negative random variable with 
finitely many values then $X$ is called a mixture of Normal random variables (or vectors). If $Z$ follows 
generalized inverse Gaussian (GIG) distribution with the density function
\begin{equation}\label{gign}
 f_{GIG}(x; \lambda, a, b)= (\frac{b}{a})^{\lambda}\frac{1}{K_{\lambda}(ab)}x^{\lambda-1}e^{-\frac{1}{2}(a^2x^{-1}+b^2x)}1_{(0, +\infty)}(x), \end{equation}
where $K_{\lambda}(x)$ denotes the modified Bessel function of third kind with index $\lambda$, then the model (\ref{one}) becomes generalized hyperbolic (GH) distributions. We refer to \cite{Hammerstein_EAv_2010} (Section 1.2, Chapter 1) for further details, especially for the allowed range of parameters $a, b, \lambda$,  of this class of distributions.

The class of GIG distributions belong to the class of generalized gamma convolution (GGC) random variables, see Proposition 1.23 of \cite{Hammerstein_EAv_2010} and Proposition 1.5 of \cite{Mark-Yor-GGC} for this. A positive random variable $Z$ is a GGC, if its Laplace transform takes the following form
\begin{equation}
\mathcal{L}_Z(s)=:Ee^{-sZ}=e^{-\tau s-\int_0^{\infty}log(1+s/x)\nu(dx)},   
\end{equation}
for some  $\tau\geq 0$ called the drift coefficient and some positive measure $U$ called the Thorin measure that satisfy
\begin{equation}\label{vv}
 \int_0^1|log(x)|\nu(dx)<\infty, \; \; \; \int_1^{\infty}\frac{1}{x}\nu(dx)<\infty.   
\end{equation}
The class GGC of distributions are infinitely divisible and self-decomposable, see \cite{Halgreen-1979} and \cite{Barndorff-Halgreen-1977} for these. A random variable $Z$ is a GGC if and only if its Laplace transform  $\mathcal{L}_Z(s)$ is a hyperbolically completely monotone (HCM) function. 

The HCM property for a function $f: (0, +\infty)\rightarrow (0, +\infty)$ means that the function 
\[
f(s_1s_2)f(s_1/s_2)
\]
is completely monotone (CM) as a function of $s=s_1+s_2$ for every $s_1>0$. A positive random variable $Z$ is called HCM if its probability density function $f$ is a HCM.
The paper \cite{Bondesson-1990} proved that the class of HCM distributions belong to GGC. The class HCM is a proper subset of GGC. All the positive $\alpha-$stable random variables, $S_{\alpha}, \alpha\in (0, 1),$ belong to GGC and  a positive $\alpha-$stable random variable is in HCM if and only if $\alpha\le 1/2$, see \cite{Bosch-Simon-2016}. The class HCM is closed under multiplication and division of independent random variables. Also for any $Z\in HCM$ one has $Z^{p}\in HCM$ for every real number $p$ with absolute value $|p|\geq 1$. These facts show that $S_{\alpha}^p$ is in GGC for any $\alpha \in (0, 1/2)$ and $|p|\geq 1$. Other examples of GGC random variables include log-normal models and Pareto distributions.

The motivation of this paper is to show the robustness of the utility maximizing optimal portfolio under exponential utility function when return vectors are modelled by distributions of the form (\ref{one}). For more details of the expected utility maximization problem we refer to \cite{rasonyi-hasan}. To give a short review, in this paper the wealth that corresponds to portfolio weight $x$ on the risky assets is given by
\begin{equation}\label{wealth}
\begin{split}
W(x)=&W_0[1+(1-x^T1)r_f+x^TX] \\
=&W_0(1+r_f)+W_0[x^T(X-\1 r_f)]
\end{split}
\end{equation}
 and the investor's problem is
\begin{equation}\label{L2}
\max_{x\in \R^d}\; EU(W(x)).\\
\end{equation}
According to Proposition 2.17 of the paper \cite{rasonyi-hasan}, a regular solution for the utility maximizing portfolio is give by  
\begin{equation}\label{themainn}
x^{\star}=\frac{1}{aW_0}\Big [\Sigma^{-1}\gamma -q_{min}\Sigma^{-1}(\mu-\1 r_f)\Big ],    
\end{equation}
for some
\begin{equation}\label{32qq}
 q_{min}\in \arg min_{\theta \in (-\tha, \tha)}Q(\theta),
 \end{equation}
where $\tha=:\sqrt{\frac{\mathcal{A}-2\s}{\mathcal{C}}}$ and $\s$ is the IN of the mixing distribution $Z$, see the Definition 2.4 of \cite{rasonyi-hasan} for this. Here $\mathcal{A}, \mathcal{C}, \mathcal{B},$ are given as in equation (32) of \cite{rasonyi-hasan} (here we assume that the model (\ref{one}) is such that $\mu-\1r_f\neq 0$ as in Remark 2.2 of \cite{rasonyi-hasan} and this guarantees $\mathcal{C}\neq 0$). The function $Q(\theta)$ is given as
\begin{equation}\label{H}
Q(\theta)=e^{\mathcal{C}\theta}\mathcal{L}_Z\Big [\frac{1}{2}\mathcal{A}-\frac{\theta^2}{2}\mathcal{C} \Big ].   
\end{equation}
The optimal portfolio (\ref{themainn}) is the utility maximizing optimal portfolio  under the exponential utility function $U(x)=-e^{-ax}, a>0$, and the notation $W_0$ denotes the initial wealth of the investor.

The closed form (\ref{themainn}) of the expected utility maximizing optimal portfolio is obtained without introducing any assumption on the mixing distribution $Z$ in the paper \cite{rasonyi-hasan}. The mixing distribution $Z$ can be a not integrable random variable and we only need to know the IN of $Z$ to be able to write down the utility maximizing optimal portfolio under exponential utility.

While the formula (\ref{themainn}) is convenient in practical applications, one needs to address the issue of robustness of this optimal portfolio with respect to model parameters in the model (\ref{one}). In practice the parameters of the NMVM models are estimated based on Expectation-Maximization algorithm (EM) or other statistical procedures and such estimation procedures  usually give some errors. For this reason it is important to study the robustness of the optimal portfolio 
(\ref{themainn}) with respect to  model parameters $\mu, \gamma, A, Z$ in (\ref{one}). 

To further clarify this point, assume the model (\ref{one}) is the true model for the return vector of risky assets and  assume EM-algorithm  or other statistical estimation procedures lead into a different model
\begin{equation}\label{onee}
X_e\overset{d}{=}\mu_e+\gamma_e Z_e+\sqrt{Z_e}A_eN_n,
\end{equation}
instead of (\ref{one}). Now, assume the differences of $\mu$ with $\mu_e$, $\gamma$ with $\gamma_e$, $A$ with $A_e$, and $Z$ with $Z_e$ are \emph{small} in some norms. Then we would like to show that the optimal portfolio $x^{\star}_e$ based on the model (\ref{onee}) is close to the optimal portfolio $x^{\star}$ obtained based on the model (\ref{one}) in the Euclidean norm. 

To illustrate this robustness issue of the optimal portfolio in (\ref{themainn}) with an example, lets   consider the case of GH models 
$X\sim GH_n(\lambda, \alpha, \beta, \delta, \mu, \Sigma)$. Assume the true parameters are $\lambda, \alpha, \beta, \delta, \mu, \Sigma$ and however EM-algorithym or other statistical procedures lead into a different model $X_e\sim GH_n(\lambda_e, \alpha_e, \beta_e, \delta_e, \mu_e, \Sigma_e)$ with estimated parameters $\lambda_e, \alpha_e, \beta_e, \delta_e, \mu_e, \Sigma_e$. Our main concern is to examine if the optimal portfolio $x^{\star}$ in (\ref{themainn}) for the model $X$ in (\ref{one}) is close to the corresponding optimal portfolio $x^{\star}_e$ for the model $X_e$ in (\ref{onee}) if the parameters $\lambda_e, \alpha_e, \beta_e, \delta_e, \mu_e, \Sigma_e$ are close to the true parameter set $\lambda, \alpha, \beta, \delta, \mu, \Sigma$ in some norms. 

Out of these five parameters $\mu$ and $\beta$ are vectors in $\R^n$ and the Euclidean norm can be used to measure the distance for these parameters. The parameter $\Sigma$ is an $n\times n$ matrix and the Hilbert-Schmit norm for matrices can be used as a measure for distance for this parameter. The other parameters show up in the mixing distribution GIG in the form $\lambda, \delta, \sqrt{\alpha^2-\beta^T\Sigma \beta}$. We don't introduce a measure for closedness for each parameter $\lambda, \delta, \sqrt{\alpha^2-\beta^T\Sigma \beta}$ of the mixing distribution. But we need to use a distance to measure closedness of the laws of two different mixing distributions $Z_1$ and $Z_2$ in the nodel (\ref{one}). We have many options for this and here we review few of them. The Fortet-Mourier distance between laws of random variables is
\begin{equation}
 d_{FM}(Z_1, Z_2)=\sup_{|h|_{\infty}\le 1, |h'|_{\infty}\le 1}|E(h(Z_1))-Eh(Z_2)|, 
 \end{equation}
where $h$ represents continuous functions,  $|\cdot|_{\infty}$ is sup norm within the class of continuous functions, and $h'$ is first order derivative of $h$. It is well known that this distance metrize the convergence in law, i.e., $d_{FM}(Z_n, Z)\rightarrow 0$ if and only if $ d_{FM}(Z_n, Z)\rightarrow 0$ as $n\rightarrow \infty$. Another distance is Kolmogorov's distance
\begin{eqnarray*}
d_{Kol}(Z_1, Z_2)=\sup_{x\in \R}|F_1(X)-F_2(X)|,    
\end{eqnarray*}
where $F_1, F_2$ are commutative distribution functions of $Z_1$ and $Z_2$ respectively. This is a stronger distance than the FM distance in the sense that $d_{Kol}(Z_n, Z)\rightarrow 0$ implies  $ d_{FM}(Z_n, Z)\rightarrow 0$. Another popular distance which is stronger than the Kolmogorov distance is the total variation distance
\begin{equation}
d_{TV}(Z_1, Z_2)=sup_{B\in \mathcal{B}}|P_1(B)-P_2(B)|,    
\end{equation}
where $\mathcal{B}$ is the sigma algebra of Bore sets and $P_1$ and $P_2$ are distribution functions of $Z_1$ and $Z_2$ respectively. By the Scheffe's theorem we have
$d_{TV}(Z_1, Z_1)=\frac{1}{2}\int_0^{+\infty}|f_1(x)-f_2(x)|dx,$ where $f_1$ and $f_2$ are probability density functions of $Z_1$ and $Z_2$. All these distances can be used to measure the closedness of the mixing distributions. However if the mixing distributions are from the class $GGC$ then convergence in law is equivalent to convergence in total variation norm as Lemma \ref{lemWTV} below shows.

 The above discussions motivates us to  measure closedness of the mixing distributions by the weakest  distance $d_{FM}$ that metrize the weak convergence. We will focus our discussions on robustness issue for optimal portfolios that are regular only. The reason is  if an optimal portfolio $x^{\star}$ is irregular, then there exists a sequence of portfolios $x_n^{\star}$ with  $|x_n^{\star}-x^{\star}|\rightarrow 0$ such that $EU(W(x_n^{\star}))=-\infty$ while $ |EU(W(x^{\star}))|<\infty$. This means that a slight mis-specification $\tilde{x}^{\star}$ of the optimal portfolio $x^{\star}$, which can result in from mis-specification of the model parameters in (\ref{one}), can lead to an expected utility that equals to negative infinity while  the \emph{true} optimal portfolio $x^{\star}$ has finite expected utility. This makes the discussion of the robustness issue at irregular solutions  meaningless.
 
 In this paper we use the following notations. For any vectors $x=(x_1, x_2, \cdots, x_d)^T$ and $y=(y_1, y_2, \cdots, y_d)^T$ in $\R^d$, where the superscript $T$ stands for the transpose of a vector, $<x, y>=x^Ty=\sum_{i=1}^dx_iy_i$ denotes the scalar product of the vectors $x$ and $y$, and $|x|=\sqrt{\sum_{i=1}^dx_i^2}$ denotes the Euclidean norm of the vector $x$. For any matrix $A$ we use $|A|_{HS}=\sqrt{\sum_{i=1, j=1}^d|A_{ij}|^2}$ to denote the Hilbert-Schmidt norm of a matrix. We use the notation $\overset{w}{\rightarrow}$ to denote weak convergence of random variables. We sometimes use the short hand notation $X\sim N(\mu+\gamma z, z\Sigma ) \circ G$ for (\ref{one}). $\R$ denotes the set of real numbers and $\R_+=[0, +\infty)$ denotes the set of non-negative real-numbers. Following the same notations of \cite{Mark-Yor-GGC}, $\mathcal{J}$ denotes the family of infinitely divisible random variables on $\R_+$, $\mathcal{S}$ denotes the set of self-decomposable random variables on $\R_+$, and $\mathcal{G}$ denotes the class of generalized gamma convolutions (GGCs) on $\R_+$ that will be introduced later. The Laplace transformation of any distribution $G$ is denoted by $\mathcal{L}_G(s)=\int e^{-sy}G(dy)$.

The paper is organized as follows. In Section 2 below we show that weak convergence within GGC implies convergence of the mean values for the integrable GGC random variables. In Section 3 we use this fact to show the robustness of the optimal portfolios for the problem (\ref{L2}).

\section{Weak convergence  within GGC}

Our discussions about robustness of the optimal portfolio will be focused on models of the form (\ref{one}) with mixing distribution $Z$ that belong to the GGC class of random variables. The class of GGC models include the class of GIG models as discussed in Proposition 1.23 of \cite{Hammerstein_EAv_2010}. The class GGC of distributions appeared in Thorin's work and it includes popular models like log-normal distributions, Pareto distributions, and all positive  stable random variables. They are closed in weak limits, addition of independent random variables,   and multiplication of independent random variables, see \cite{Bondesson-ProbTheory} for further details.

Before we start our discussions, we first write down the definition of GGC random variables, see \cite{Bondesson-ProbTheory, Bondesson-TheProb} for more details.

\begin{definition}\label{defGGC} A GGC is a probability distribution on $[0, \infty)$ with  Laplace transformation of the form $\phi(s)=exp\{-\tau s-\int_0^{\infty}log(1+\frac{s}{t})\nu(dt)\}$, where $\nu(dt)$ is a nonnegative measure on $(0, \infty)$ and it satisfies (\ref{vv}),  and $\tau$ is a non-negative number which is called left extremity of the GGC random variable. The pair $(\tau, \nu)$ is called generator of a GGC random variable. When $\tau=0$, we call the associated random variable a GGC without a drift.
\end{definition}

\begin{remark}\label{rem3.6} Let $Z$ be a GGC with generating pair $(\tau, \nu)$. Define a GGC random variable $\bar{Z}$ with LT given by $\bar{\phi}(s)=e^{-\int_0^{\infty}log(1+\frac{s}{t})\nu(dt)}$. Then $Z\overset{d}{=}\bar{Z}+\tau$ or equivalently $Z-\tau\overset{d}{=}\bar{Z}$.
\end{remark}

 Before we discuss the robustness problem that is stated above, we first collect few properties of the GGC distributions in the following Lemma.   

\begin{lemma}\label{2.9} Let $Z$ be a nondegenerate GGC random variable with generating pair $(\tau, \nu)$. Let $\s$ be the IN of $Z$.
Then $\s$ is a finite number and the measure $\nu$  satisfies $\nu([0, -\s])=0$. We have $\mathcal{L}_{Z}(s)=Ee^{-sZ}=e^{-\tau s-\int_{-\s}^{\infty}log(1+\frac{s}{z})\nu(dz)}$ and
there exists a deterministic strictly positive and  decreasing (almost surely) function $h(s)$ on $[-\s, +\infty)$ such that $Z-\tau \overset{d}{=}\int_{-\s}^{+\infty}h(s)d\gamma_s$, where $\gamma_s$ is a standard gamma
subordinator with L\'evy measure $\frac{e^{-x}}{x}dx, x>0$.
\end{lemma}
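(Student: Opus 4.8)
The plan is to route all four assertions through one structural fact, namely that the abscissa of convergence $\s=\inf\{s\in\R:\LL_Z(s)<\infty\}$ of a GGC equals $-\inf(\supp\nu)$, and then to realise $Z-\tau$ as a Wiener--gamma integral. Throughout I would first reduce to the driftless case: by Remark~\ref{rem3.6} one has $Z\overset{d}{=}\bar Z+\tau$ with $\LL_{\bar Z}(s)=e^{-\int_0^\infty\log(1+s/z)\nu(dz)}$, and since $\LL_Z(s)=e^{-\tau s}\LL_{\bar Z}(s)$ the two transforms have the same abscissa of convergence. So it suffices to analyse $\bar Z$ and reattach the deterministic shift $\tau$ at the end.

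The first step is to locate the abscissa. Write $t_0=\inf(\supp\nu)\ge 0$. For $s>-t_0$ every $z\in\supp\nu$ satisfies $z\ge t_0>-s$, so $1+s/z>0$ and $\log(1+s/z)$ is real; its $\nu$-integrability follows from (\ref{vv}), the behaviour as $z\to 0$ being controlled by $\int_0^1|\log z|\,\nu(dz)<\infty$ and the behaviour as $z\to\infty$ by $\log(1+s/z)\sim s/z$ together with $\int_1^\infty z^{-1}\nu(dz)<\infty$. Hence $\LL_{\bar Z}(s)<\infty$ for $s>-t_0$. Conversely, for $s<-t_0$ there is $\nu$-mass on $\{z<-s\}$, where $1+s/z<0$, so the GGC formula is no longer real; I would upgrade this to $\LL_Z(s)=Ee^{-sZ}=+\infty$ by noting that $\LL_Z$ is finite and analytic on the interior of its (interval) domain of convergence, coincides with the analytic function $\Phi(s)=e^{-\tau s-\int\log(1+s/z)\nu(dz)}$ for $s\ge 0$, and could therefore extend past $-t_0$ only if $\Phi$ did, which it cannot since the integral loses its real meaning there. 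This identifies $\s=-t_0$. Finiteness of $\s$ is then immediate: nondegeneracy forces $\nu\neq 0$, whence $t_0=\inf(\supp\nu)<\infty$ and $\s=-t_0>-\infty$, while $\s\le 0$ always. Moreover $\nu$ carries no mass strictly below $t_0=-\s$, i.e. $\nu([0,-\s))=0$, which lets me start the integral at $-\s$ and write $\LL_Z(s)=e^{-\tau s-\int_{-\s}^\infty\log(1+s/z)\nu(dz)}$.

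For the last assertion I would invoke the Wiener--gamma (Thorin--Bondesson) representation of a GGC. The function $h$ on $[-\s,\infty)$ is the decreasing rearrangement associated with pushing $\nu$ forward under $z\mapsto 1/z$, that is, the decreasing $h$ determined by $\mathrm{Leb}\{u:h(u)>y\}=\nu((0,1/y))$ for $y>0$; equivalently $h$ is a generalised inverse of the Thorin tail, so its values lie in $(0,1/t_0]=(0,-1/\s]$, which explains both the positivity of $h$ and the appearance of the endpoint $-\s$. The verification has two parts. First, the stochastic integral $\int_{-\s}^\infty h(u)\,d\gamma_u$ must converge almost surely; I would deduce this from $\int\log(1+h(u))\,du<\infty$, which is a reformulation of (\ref{vv}). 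Second, its Laplace transform must match $\LL_{\bar Z}$. For this I would use the L\'evy--Khintchine formula for the gamma subordinator, whose L\'evy measure is $\tfrac{e^{-x}}{x}\,dx$: by Fubini,
\[
-\log E\exp\Big\{-s\!\int_{-\s}^\infty\! h(u)\,d\gamma_u\Big\}=\int_{-\s}^\infty\!\!\int_0^\infty\!\big(1-e^{-s\,h(u)\,x}\big)\frac{e^{-x}}{x}\,dx\,du=\int_{-\s}^\infty\log\big(1+s\,h(u)\big)\,du,
\]
and the defining property of $h$ (change of variable $y=1/z$) turns the right-hand side into $\int_0^\infty\log(1+s/z)\,\nu(dz)=-\log\LL_{\bar Z}(s)$. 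Equality of Laplace transforms then yields $Z-\tau=\bar Z\overset{d}{=}\int_{-\s}^\infty h(u)\,d\gamma_u$.

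I expect the main obstacle to be twofold. The delicate analytic point is the lower-bound direction, showing $\LL_Z(s)=+\infty$ for $s<-t_0$ rather than merely that the integrand becomes complex; this rests on the analyticity and shape of the domain of convergence of a Laplace transform and on ruling out any analytic continuation of $\Phi$ through $-t_0$. The heavier technical step is the construction of $h$ together with the almost-sure convergence of $\int h\,d\gamma$ and the Fubini interchange justifying the Laplace computation above — essentially the Thorin--Bondesson theorem, which I would cite from \cite{Bondesson-ProbTheory} rather than reprove, reducing my task to checking that the $h$ attached to $\nu$ has the correct range $(0,-1/\s]$ and satisfies the integrability needed for the representation.
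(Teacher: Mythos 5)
Your overall skeleton matches the paper's: reduction to the driftless case via Remark \ref{rem3.6}, identification of $\s$ with $-\inf(\supp\nu)$, finiteness of $\s$ from nondegeneracy, and the Wiener--gamma representation at the end. Your self-contained verification of that representation (L\'evy--Khintchine plus Fubini, with $h$ the decreasing rearrangement of the image of $\nu$ under $z\mapsto 1/z$) is sound and is a legitimate substitute for the paper's citation of Proposition 1.1 of \cite{Mark-Yor-GGC}: the layer-cake identity does convert $\int\log(1+s\,h(u))\,du$ into $\int\log(1+s/z)\,\nu(dz)$, and $\int\log(1+h(u))\,du<\infty$ is indeed equivalent to (\ref{vv}).

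The genuine gap is in the divergence half of your abscissa identification, i.e.\ the claim $\LL_Z(s)=+\infty$ for $s<-t_0$ --- and this is precisely the half that delivers $\nu([0,-\s])=0$, so it cannot be waved through. Your justification, that $\LL_Z$ ``could extend past $-t_0$ only if $\Phi$ did, which it cannot since the integral loses its real meaning there,'' is not a proof: a function given by a formula on an interval may well extend analytically beyond the region where the formula converges (compare $\sum_k s^k$ with $1/(1-s)$). Monotone convergence does settle the case $\int\log(1-t_0/z)\,\nu(dz)=-\infty$, since then $\LL_Z(s)=\Phi(s)\uparrow+\infty$ as $s\downarrow -t_0$; but if $\nu$ has no atom at $t_0$ and $\int_{(t_0,t_0+1]}|\log(z-t_0)|\,\nu(dz)<\infty$, then $\Phi$ stays finite at $-t_0$ and your argument yields nothing. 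Two ways to close it. Probabilistic: write $Z-\tau=Z_1+Z_2$ as an independent sum of GGCs, where $Z_1$ has Thorin measure $\nu|_{[t_0,t_0+\delta]}$ with mass $\beta_1>0$; its Wiener--gamma integrand is bounded below by $1/(t_0+\delta)$, so $Z_1\geq \gamma_{\beta_1}/(t_0+\delta)$ pathwise, hence $Ee^{\lambda Z}\geq c\,Ee^{\lambda Z_1}=+\infty$ for every $\lambda\geq t_0+\delta$, and $\delta>0$ is arbitrary. Analytic: if $\LL_Z$ were finite below $-t_0$ it would be analytic at $-t_0$ and would continue $\Phi$, forcing the Taylor coefficients $\frac{1}{k}\int(z-t_0)^{-k}\nu(dz)$ of $-\log\Phi$ to grow at most geometrically, which forces $\nu(\{t_0\})=0$ and $\nu((t_0,t_0+\epsilon])=0$ for small $\epsilon>0$, contradicting $t_0=\inf(\supp\nu)$. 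Be aware that the paper's own proof leans silently on the same point (it assumes the GGC formula remains valid, as a real identity, on the entire interval where $\LL_Z$ is finite), so your instinct to argue this step explicitly is the right one; only the execution is missing.
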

\begin{proof} Due to Remark \ref{rem3.6}, it is sufficient to consider GGC random variables with zero drift. Therefore below we assume $Z$ is a GGC with generating pair $(0, \nu)$. Recall the Laplace transformation 
$\mathcal{L}_{Z}(s)=Ee^{-sZ}=e^{-\int_0^{\infty}log(1+\frac{s}{z})\nu(dz)}
$
of $Z$ from the 
definition 1.0 of \cite{Mark-Yor-GGC} at page 3.51.  By the definition of $s_0$, the integral  $\int_0^{\infty}log(1+\frac{s}{z})\nu(dz)$ is finite
for all $s> \s$. This means that one should have 
$1+\frac{s}{z}>0$ for all $s>\s$. But this is true only if $z>-\s$. This implies that the Thorin measure $\nu$ needs to assign zero measure to $[0, -\s]$. We can't have $\s=-\infty$ as this would imply $\nu([0, +\infty))=0$,  a contradiction for the non-degenerancy of  $Z$. From part 2. of Proposition 1.1 of \cite{Mark-Yor-GGC}, we have $Z=\int_{-\s}^{+\infty}h(s)d\gamma_s$ with $h(s)=1/F^{-1}_{\nu}(x)$, where $F^{-1}_{\nu}$ 
is the right continuous inverse of $F_{\nu}(x):=\int_{(-\s, x]}\nu(dx)$ on $[-\s, +\infty)$. The function $F_{\nu}(x)$ is a finite valued increasing function and therefore its right continuous inverse $F_{\nu}^{-1}$ is also finite valued and increasing function. Therefore $h(s)$ is a decreasing function. Now if $h(s)=0$ on $[-\s, s')$ for some $s'>-\s$, then $\nu([0, s'])=0$ which contradicts with the definition of $\s$. Therefore $h(s)>0$ almost surely on $[-\s, +\infty)$.
\end{proof}

Next we state the following continuity theorem which is useful for our discussions.

\begin{theorem}\label{conti} Let $\{Z_n\}$ be a family from GGC random variables with generating pairs $\{(\tau_n, \nu_n)\}$. Assume $Z_n$ converges weakly to a distribution $Z$. Then $Z$ is also a GGC with a generating  pair $(\tau, \nu)$. We have $\nu_n$
converges weakly to $\nu$,
$\tau=\lim_{M\rightarrow \infty}\lim_{n\rightarrow \infty}[\tau_n+\int_M^{\infty}\frac{1}{x}\nu_n(dx)]$, and $\lim_{\delta\rightarrow 0}\lim_{n\rightarrow \infty}\int_0^{\delta}(\ln t)\nu_n(dt)=0$.
\end{theorem}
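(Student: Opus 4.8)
The plan is to pass from weak convergence of the laws to convergence of the associated Laplace exponents, and then to read off the Thorin data from the derivatives of these exponents. First I would invoke the continuity theorem for Laplace transforms: since $Z_n \overset{w}{\rightarrow} Z$ and all variables live on $[0,\infty)$, we obtain $\mathcal{L}_{Z_n}(s) \to \mathcal{L}_Z(s)$ for every $s \geq 0$, with strictly positive limit. Writing $\psi_n(s) := -\log \mathcal{L}_{Z_n}(s) = \tau_n s + \int_0^\infty \log(1+s/t)\,\nu_n(dt)$ and $\psi(s) := -\log\mathcal{L}_Z(s)$, this gives $\psi_n \to \psi$ pointwise on $[0,\infty)$. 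Each $\psi_n$ is an increasing, concave Thorin--Bernstein function, and the goal is to show that $\psi$ is again of this form, which is exactly the statement $Z \in \mathcal{G}$, while simultaneously identifying its generator as a limit of $(\tau_n,\nu_n)$.

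Second, I would upgrade pointwise convergence to convergence of derivatives. The $\psi_n$ extend to Pick (Nevanlinna) functions on $\mathbb{C}\setminus(-\infty,0]$, so the family is normal and pointwise convergence on $(0,\infty)$ forces locally uniform convergence together with all derivatives (alternatively, concave functions converging pointwise converge locally uniformly with their derivatives). Differentiating the representation yields $\psi_n'(s) = \tau_n + \int_0^\infty \frac{\nu_n(dt)}{t+s}$ and $-\psi_n''(s) = \int_0^\infty \frac{\nu_n(dt)}{(t+s)^2}$; the second expression is $\tau_n$-free, is completely monotone, and equals the total mass of the finite measure $(t+s)^{-2}\nu_n(dt)$, which makes it the natural object for extracting $\nu_n$.

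Third, to recover the measure I would set $\mu_n(dt) := (1+t)^{-2}\nu_n(dt)$, whose total mass $-\psi_n''(1)$ stays bounded, and write $-\psi_n''(s) = \int (1+t)^2(t+s)^{-2}\,\mu_n(dt)$ against the bounded continuous kernel $g_s(t) = (1+t)^2/(t+s)^2$. A Prokhorov/Helly selection on the compactification $[0,\infty]$, combined with the injectivity of the transform $\nu \mapsto (-\psi'')$, yields (along subsequences, then for the whole sequence by uniqueness) a vague limit $\nu_n \to \nu$ on $(0,\infty)$. One then verifies that $\nu$ satisfies (\ref{vv}) and that $\psi(s) = \tau s + \int_0^\infty \log(1+s/t)\,\nu(dt)$ for a suitable $\tau \geq 0$, establishing $Z \in \mathcal{G}$.

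Fourth, and this is the crux, I would account for the boundary, since vague convergence on $(0,\infty)$ controls neither mass escaping to $0$ nor to $\infty$, and this is precisely where the last two assertions originate. Splitting $\int_0^\infty \log(1+s/t)\,\nu_n(dt)$ into $(0,\delta)$, $[\delta,M]$, and $(M,\infty)$: on $[\delta,M]$ the integrand is bounded and continuous, so vague convergence handles it; on $(M,\infty)$, $\log(1+s/t)\approx s/t$, so the tail contributes $s\int_M^\infty \frac1x\,\nu_n(dx)$ to leading order, and matching the linear-in-$s$ part of $\psi$ forces $\tau = \lim_{M\to\infty}\lim_{n\to\infty}[\tau_n + \int_M^\infty \frac1x\,\nu_n(dx)]$; on $(0,\delta)$, $\log(1+s/t)\approx \log s - \log t$, so the absence of a spurious contribution near the origin amounts exactly to $\lim_{\delta\to0}\lim_{n\to\infty}\int_0^\delta(\ln t)\,\nu_n(dt)=0$. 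The main difficulty is making these three regimes fit together uniformly in $n$, that is, establishing the appropriate weighted tightness of $\{\nu_n\}$ and justifying the interchange of the $\delta\to0$, $M\to\infty$, and $n\to\infty$ limits; the middle range is routine, while the simultaneous control of the escaping mass at both endpoints is the real obstacle.
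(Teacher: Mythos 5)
Your plan follows the classical Thorin--Bondesson route, and it is worth noting that the paper itself offers no argument for this theorem at all: its ``proof'' is a pointer to page 35 of \cite{Bondesson-lecturenotes}, to \cite{Thorin1977OnTI}, and to Theorem 1.22 of \cite{Hammerstein_EAv_2010}. Your first three steps are essentially correct: weak convergence on $[0,\infty)$ gives $\psi_n:=-\log\mathcal{L}_{Z_n}\to\psi:=-\log\mathcal{L}_Z$ pointwise; concavity of the Bernstein functions $\psi_n$ upgrades this to $\psi_n'\to\psi'$, and since each $\psi_n'$ is completely monotone, the continuity theorem for completely monotone functions yields convergence of all higher derivatives and vague convergence of $(t+s)^{-2}\nu_n(dt)$, hence of $\nu_n$ on compact subsets of $(0,\infty)$. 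The genuine gap is your fourth step. The two displayed conclusions of the theorem, $\tau=\lim_{M}\lim_{n}[\tau_n+\int_M^\infty x^{-1}\nu_n(dx)]$ and $\lim_{\delta}\lim_{n}\int_0^\delta(\ln t)\,\nu_n(dt)=0$, are precisely the assertion that no mass escapes to the endpoints $0$ and $\infty$ in a harmful way; you derive them only heuristically from the expansions $\log(1+s/t)\approx s/t$ and $\log(1+s/t)\approx\log s-\log t$, and then explicitly label the uniform-in-$n$ control ``the real obstacle.'' Since everything before that point is routine machinery, what you have left unproven is exactly the content of the theorem, so the proposal as written asserts rather than proves it.

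Both limits can be closed with two inputs you never use. For the origin, use properness of the limit law: since $Z$ is a genuine random variable, $\mathcal{L}_Z(s)\to1$ as $s\downarrow0$, i.e.\ $\psi(0+)=0$; this is where tightness of $\{Z_n\}$ enters, and it is what excludes examples such as $\nu_n=\epsilon_n\delta_{t_n}$ with $\epsilon_n\log(1/t_n)\to c>0$, where $|\ln t|$-mass escaping into $0$ would add the constant $c$ to $\psi(s)$ for every $s>0$. Concretely, for $0<t<\delta<1$ and $0<s<1$ one has $\log(1+s/t)\ge\log(1/t)-|\log s|$, and $\nu_n((0,\delta))\le\psi_n(1)/\log(1/\delta)$, whence $\psi_n(s)\ge\int_0^\delta\log(1/t)\,\nu_n(dt)-|\log s|\,\psi_n(1)/\log(1/\delta)$; letting $n\to\infty$, then $\delta\to0$, then $s\downarrow0$ and invoking $\psi(0+)=0$ gives $\lim_{\delta}\limsup_{n}\int_0^\delta\log(1/t)\,\nu_n(dt)=0$, which is the third assertion. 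For the drift, use monotonicity: $\psi_n'(s)=\tau_n+\int_0^\infty(t+s)^{-1}\nu_n(dt)$, and for $t\ge M$ the sandwich $\frac{M}{M+s}\,t^{-1}\le(t+s)^{-1}\le t^{-1}$ traps $\tau_n+\int_M^\infty t^{-1}\nu_n(dt)$ between quantities expressed through $\psi_n'(s)-\int_0^M(t+s)^{-1}\nu_n(dt)$; combining $\psi_n'\to\psi'$, the vague convergence already established on $(0,M]$, and the monotone limit $\psi'(s)\downarrow\tau$ as $s\uparrow\infty$ yields the drift formula. With these two blocks inserted your outline becomes a complete proof; without them it is an announcement of the theorem, not a demonstration of it.
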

\begin{proof} See page 35 of \cite{Bondesson-lecturenotes}, also see \cite{Thorin1977OnTI} and Theorem 1.22 of \cite{Hammerstein_EAv_2010} for this.
\end{proof}

\begin{remark}\label{38}
Before we state our next result we make few observations on the LT of a GGC random variable first. According to Proposition 1 of \cite{Bondesson-TheProb},
a function $\phi(s)$ is a LT of a GGC random variable iff $\phi(s)$ is analytic in $C/(-\infty, 0]$ and without zeros and $\phi(0)=1$ and $Im[\phi'(s)/\phi(s)]\geq 0$ for $Im(s)> 0$. This means that $\phi(s)$ is differentiable for any $s>0$ and 
\begin{equation}\label{LTGGC}
\phi'(s)/\phi(s)=-\tau-\int_0^{\infty}\frac{1}{t+s}\nu(dt),    
\end{equation}
where $(\tau, \nu)$ is the generator of a GGC random variable $Z$ with LT equals to $\phi(s)$. Now assume $EZ<\infty$, then 
from $\phi(s)=Ee^{-sZ}$ we have $\phi'(s)=-E[Ze^{-sZ}]$ for all $s>0$. By dominated convergence theorem we have 
$\lim_{s\rightarrow 0+}\phi'(s)=-EZ$. Now from (\ref{LTGGC}) we obtain 
\begin{equation}\label{LTGGC1}
EZ=\tau+\int_0^{+\infty}\frac{1}{t}\nu(dt).  \end{equation}
Note that here we used monotone convergence theorem for the integral in the right-hand-side of (\ref{LTGGC}) to obtain (\ref{LTGGC1}). Now recall that a Thorin measure satisfies $\int_1^{+\infty}\frac{1}{t}\nu(dt)<\infty$ and $\int_0^{1}|log t|\nu(dt)<\infty$. The relation (\ref{LTGGC1}) shows that when $EZ<\infty$ it also satisfies $\int_0^1\frac{1}{t}\nu(dt)<\infty$.
\end{remark}

From this we immediately obtain the following result.
\begin{lemma}\label{contilem} Let $\{Z_n\}$ be a family from GGC and assume $Z_n$ converges weakly to $Z$ and assume $EZ_n<\infty, EZ<\infty$. Let $(\tau_n, \nu_n)$ be the generator of $Z_n$ for each $n$ and let $(\tau, \nu)$ be the generator of $Z$. Define $g^{(n)}(\delta)=:\int_0^{\delta}\frac{1}{t}\nu_n(dt)$ for each $n\geq 1$ and assume
\begin{equation}\label{condition}
 g(\delta)=:\lim_{n\rightarrow \infty}g^{(n)}(\delta)   
\end{equation}
exists and finite for each $\delta \in [0, c]$ for some $c>0$. Then  $g(\delta)=q(\delta)=:\int_0^{\delta}\frac{1}{t}\nu(dt)$ on $[0, c]$ and at the same time we have $EZ_n\rightarrow EZ$.
\end{lemma}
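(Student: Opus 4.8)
The plan is to combine the continuity theorem (Theorem \ref{conti}) with the mean formula \eqref{LTGGC1} of Remark \ref{38}, splitting $\int_0^{\infty}\tfrac1t\nu_n(dt)$ at a small cutoff $\delta$ and a large cutoff $M$ and tracking each piece as $n\to\infty$. From Theorem \ref{conti} I first record what is already available: the weak limit $Z$ is itself a GGC with some generator $(\tau,\nu)$, the Thorin measures converge weakly $\nu_n\to\nu$, and $\tau=\lim_{M\to\infty}\lim_{n\to\infty}\big[\tau_n+\int_M^{\infty}\tfrac1x\nu_n(dx)\big]$. The role of $\nu_n\to\nu$ is that on any interval $[\delta,M]$ whose endpoints are continuity points of $\nu$ (all but countably many points are, since $\nu$ has at most countably many atoms) the integrand $1/t$ is bounded and continuous, so that $\int_\delta^M\tfrac1t\nu_n(dt)\to\int_\delta^M\tfrac1t\nu(dt)$.

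Using the hypothesis $g^{(n)}(\delta)\to g(\delta)$ together with this middle-interval convergence, I would subtract to obtain, for continuity points $0<\epsilon<\delta\le c$,
\[ g(\delta)-g(\epsilon)=\lim_{n\to\infty}\int_\epsilon^\delta \tfrac1t\,\nu_n(dt)=\int_\epsilon^\delta \tfrac1t\,\nu(dt)=q(\delta)-q(\epsilon). \]
Hence $g-q$ is a constant $L\ge 0$ on $(0,c]$, where $L\ge0$ follows independently from the portmanteau (Fatou) lower bound $q(\delta)\le\liminf_n g^{(n)}(\delta)=g(\delta)$, valid because $\tfrac1t\1_{(0,\delta)}$ is nonnegative and lower semicontinuous. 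Since $EZ<\infty$ forces $\int_0^{\infty}\tfrac1t\nu(dt)<\infty$ via \eqref{LTGGC1}, we get $q(\epsilon)\to0$ as $\epsilon\to0$, so the assertion $g=q$ on $[0,c]$ is \emph{exactly} the statement $L=0$.

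For the mean I would write, for continuity points $\delta<M$,
\[ EZ_n=\Big[\tau_n+\int_M^{\infty}\tfrac1x\,\nu_n(dx)\Big]+\int_\delta^M\tfrac1t\,\nu_n(dt)+g^{(n)}(\delta), \]
and let $n\to\infty$: the first bracket converges by Theorem \ref{conti}, the second by weak convergence, the third by hypothesis, so $\lim_n EZ_n$ exists and equals their sum. Since the left side is independent of $\delta$ and $M$, I may then let $M\to\infty$ and $\delta\to0$, which gives $\lim_{n\to\infty} EZ_n=\tau+\int_0^{\infty}\tfrac1t\nu(dt)+L=EZ+L$. Thus \emph{both} conclusions of the lemma reduce to the single claim $L=0$.

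The main obstacle is precisely establishing $L=0$, i.e. that no $1/t$-mass of the Thorin measures escapes to the origin, equivalently $\lim_{\epsilon\to0}\lim_{n\to\infty}\int_0^\epsilon\tfrac1t\nu_n(dt)=0$. This is a genuine uniform-integrability-at-the-origin phenomenon and does not seem to follow from weak convergence $\nu_n\to\nu$ together with the mere existence of the limits $g(\delta)$: weak convergence controls $\nu_n$ only away from $0$, and the $\ln t$-smallness furnished by Theorem \ref{conti} is too weak to dominate $1/t$ near the origin. The route I would try is to convert weak convergence of the $Z_n$ into control of the small-$t$ part of $\nu_n$ through the gamma-subordinator representation $Z_n-\tau_n\overset{d}{=}\int_{-\s_n}^{+\infty}h_n(s)\,d\gamma_s$ of Lemma \ref{2.9} (so that $EZ_n-\tau_n=\int_{-\s_n}^{+\infty}h_n(s)\,ds$), and to deduce a tightness-type bound preventing the means from leaking to the origin. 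I expect this to be the crux of the argument, and the place where either the GGC structure must be used in an essential way or an extra smallness condition near $0$ is required.
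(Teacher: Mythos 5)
Your reduction is, step for step, the paper's own proof: the paper likewise uses Theorem \ref{conti} to get $g^{(n)}(b)-g^{(n)}(a)\to q(b)-q(a)$, hence $g(b)-g(a)=q(b)-q(a)$ on $(0,c]$, lets $a\to 0$ to obtain $g(b)-g(0+)=q(b)$ (so both conclusions reduce to the vanishing of your constant $L=g(0+)$, since $q(0+)=0$ by $EZ<\infty$ and \eqref{LTGGC1}), and then proves $EZ_n\to EZ$ by the same three-piece decomposition at the cutoffs $\delta$ and $M$. The only difference is the final step, which you declined to take and the paper does take: assuming $g(0+)>0$, the paper picks $\delta_n\to0$ with $g^{(n)}(\delta_n)\le g(0+)/2$ (possible since each $g^{(n)}$ vanishes at $0$) and then asserts that $g^{(n)}(\delta_n)\to g(0+)$, a contradiction. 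That assertion is a diagonal limit which pointwise convergence of $g^{(n)}$ to $g$ simply does not supply; it would require exactly the uniformity near the origin whose absence you identified as the crux. So what you flagged as the missing idea is precisely the point at which the paper's own proof is unjustified.

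Moreover, your suspicion that the stated hypotheses are insufficient is correct: no argument can close this gap, because the lemma as stated is false. Take $Z_n=\xi+\eta_n$ with $\xi\sim\Gamma(\text{shape }1,\text{scale }1)$ and, independently, $\eta_n\sim\Gamma(\text{shape }1/\beta_n,\text{scale }\beta_n)$, $\beta_n\to\infty$. Each $Z_n$ is a GGC with $\tau_n=0$ and $\nu_n=\delta_1+\beta_n^{-1}\delta_{1/\beta_n}$. Since $\mathcal{L}_{\eta_n}(s)=(1+\beta_n s)^{-1/\beta_n}\to1$ for every $s\ge0$, we have $Z_n\overset{w}{\rightarrow}Z:=\xi$, while $EZ_n=2$ for all $n$ and $EZ=1$. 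Here $g^{(n)}(\delta)=\mathbf{1}_{\{\delta\ge 1/\beta_n\}}$ for $\delta<1$, so for any $c<1$ the limit $g(\delta)$ exists and is finite on all of $[0,c]$ (it equals $0$ at $\delta=0$ and $1$ for $\delta\in(0,c]$): hypothesis \eqref{condition} holds, yet $q\equiv0$ on $[0,c]$, $g\ne q$, and $EZ_n\not\to EZ$; your $L$ equals $1$, the unit of mean carried by $\eta_n$ escaping through the origin of the Thorin measures exactly as you feared. The lemma becomes true — and then both your argument and the paper's finish immediately — if \eqref{condition} is strengthened to include $\lim_{\delta\to0}g(\delta)=0$, i.e.\ uniform integrability of $1/t$ under the $\nu_n$ near the origin. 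Be aware that this counterexample consists of finite gamma convolutions with bounded means, so it equally contradicts Lemma \ref{contiboun} and Theorem \ref{keytheorem}, which rest on the present lemma.
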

\begin{proof} By the continuity theorem \ref{conti}, $\nu_n$ weakly converges to $\nu$. Therefore we have 
$g^{(n)}(b)-g^{(n)}(a)=\int_a^b\frac{1}{t}\nu_n(dt)\rightarrow \int_a^b\frac{1}{t}\nu(dt)=q(b)-q(a)$ for any $c\geq b\geq a>0$. Since $g(\delta)$ is point-wise limit of $g^{(n)}(\delta)$ we have $g(b)-g(a)=q(b)-q(a)$ for any $c\geq b\geq a>0$. Being a point-wise limit of  monotone decreasing functions, $g(\delta)$ is a decreasing function on $[0, c]$. The function  
$q(\delta)$ is also a decreasing function. Therefore the limits of $g(a)$ and $q(a)$ when $a\rightarrow 0$ exists.
By taking the limit when $a\rightarrow 0$ to the equation $g(b)-g(a)=q(b)-q(a)$ we obtain $g(b)-g(0)=q(b)$.
It remains now to show that $g(0)=0$. Assume $g(0)>0$ and below we show that this leads into a contradiction. Since $g^{(n)}(\delta)$ is a decreasing function with $g_n(0)=0$ for each fixed $n$, we can find a convergent sequence $\delta_n\rightarrow 0$ such that $g^{(n)}(\delta_n)\le \frac{g(0)}{2}$. But $g_n(\delta_n)\rightarrow g(0)$ and this is a contradiction. Therefore we have $g(\delta)=\int_0^{\delta}\frac{1}{t}\nu(dt)$.
Next we show that $EZ_n\rightarrow EZ$. To see this observe that 
\begin{equation}\label{ezen}
EZ_n=\tau_n+\int_M^{\infty}\frac{1}{t}\nu_n(dt)+\int_{\delta}^M\frac{1}{t}\nu_n(dt)+\int_0^{\delta}\frac{1}{t}\nu_n(dt)
\end{equation}
for any $0<\delta \le c<M<\infty$. By the continuity theorem \ref{conti} we have $\tau=\lim_{M\rightarrow \infty}\lim_{n\rightarrow \infty}[\tau_n+\int_M^{\infty}\frac{1}{x}\nu_n(dx)]$. Also we have $\lim_{M\rightarrow \infty}\lim_{n\rightarrow \infty}\int_{\delta}^M\frac{1}{t}\nu_n(dt)=\lim_{M\rightarrow \infty}[\lim_{n\rightarrow \infty}\int_{\delta}^M\frac{1}{t}\nu_n(dt)]=\lim_{M\rightarrow \infty}\int_{\delta}^M\frac{1}{t}\nu(dt)=\int_{\delta}^{\infty}\frac{1}{t}\nu(dt)$. Now taking the limit $\lim_{M\rightarrow \infty}\lim_{n\rightarrow \infty}$ to the both sides of (\ref{ezen})  we obtain $EZ_n \rightarrow \tau+\int_{\delta}^{\infty}\frac{1}{t}\nu(dt)+\int_0^{\delta}\frac{1}{t}\nu(dt)$=EZ. This ends the proof.
\end{proof}

\begin{lemma}\label{Corr} Let $\{Z_n\}$ be a family from GGC and assume $Z_n$ converges weakly to $Z$ and assume $EZ_n<\infty, EZ<\infty$. Let $\s$ be the IN  of $Z$. If $\s \neq 0$, then we have $EZ_n\rightarrow EZ$. Equivalently, if the Thorin measure $U$ of $Z$ satisfies $U([0, d])=0$ for some $d>0$, then we have $EZ_n\rightarrow EZ$. 
\end{lemma}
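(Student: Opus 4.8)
The plan is to deduce the statement from Lemma \ref{contilem}. Since $Z_n\to Z$ weakly with $EZ_n<\infty$ and $EZ<\infty$ already hold by assumption, the only thing left to supply is the hypothesis (\ref{condition}): that $g(\delta):=\lim_{n\to\infty}\int_0^\delta\frac1t\nu_n(dt)$ exists and is finite on some interval $[0,c]$. Once this is established, Lemma \ref{contilem} immediately yields both $g(\delta)=\int_0^\delta\frac1t\nu(dt)$ and the desired conclusion $EZ_n\to EZ$, so the whole task reduces to verifying (\ref{condition}).

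First I would record the equivalence of the two forms of the hypothesis. By Lemma \ref{2.9} the IN $\s$ of the nondegenerate limit $Z$ and its Thorin measure $\nu$ satisfy $\nu([0,-\s])=0$; since $Z\ge 0$ forces $\s\le 0$, the assumption $\s\neq 0$ is exactly $\s<0$, i.e. $\nu([0,d])=0$ with $d=-\s>0$. Thus the limiting Thorin measure has a genuine gap $(0,d)$ carrying no mass, and I would fix any $c$ with $0<c<d$ and argue on $[0,c]$.

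For the portion of the mass bounded away from the origin I would invoke the continuity Theorem \ref{conti}: $\nu_n\to\nu$ vaguely on $(0,\infty)$, so for each fixed $a$ with $0<a<\delta\le c$ the function $1/t$ is continuous and bounded on $[a,\delta]$, whence $\int_a^\delta\frac1t\nu_n(dt)\to\int_a^\delta\frac1t\nu(dt)=0$, the right side vanishing because $\nu$ charges nothing in $(0,d)$. Consequently the only term that could prevent $g^{(n)}(\delta)$ from tending to $0$ is the contribution from the shrinking neighbourhood $(0,a]$ of the origin.

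The crux of the argument is therefore to control $\int_0^a\frac1t\nu_n(dt)$ uniformly as $a\to 0$, i.e. to rule out Thorin mass of the approximants leaking toward $t=0$ with weight heavy enough to survive the factor $1/t$. This step is delicate: the conditions (\ref{vv}) together with Theorem \ref{conti} only provide $\lim_{\delta\to 0}\lim_{n\to\infty}\int_0^\delta(\ln t)\nu_n(dt)=0$, which governs the $|\ln t|$-weighted escaping mass but not the much larger $1/t$-weighted mass, and such leakage in Thorin space corresponds precisely to probability mass of $Z_n$ escaping to $+\infty$, which weak convergence alone does not forbid. To close this I would try to exploit the limiting gap more forcefully, aiming to show that the supports of $\nu_n$ are eventually confined to $[d',\infty)$ for some $0<d'<d$ -- equivalently that the abscissas $\s_n$ stay bounded away from $0$ -- so that $\nu_n((0,a])=0$ for large $n$, giving $g\equiv 0$ on $[0,c]$ and hence (\ref{condition}). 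Transferring the gap of the limit to the approximants in this way is the main obstacle, and it is the point where an additional input -- control of the behaviour of $\s_n$, or a uniform exponential-moment/tightness bound at the $t\to 0$ end of Thorin space -- seems to be required before the resulting $g\equiv 0$ can be fed into Lemma \ref{contilem}.
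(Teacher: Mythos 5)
Your overall plan coincides with the paper's: both reduce the statement to Lemma \ref{contilem} by verifying hypothesis (\ref{condition}), both use Lemma \ref{2.9} to translate $\s\neq 0$ into a gap $\nu([0,d])=0$ in the limiting Thorin measure, and both recognize that everything hinges on showing that the approximating Thorin measures $\nu_n$ eventually carry no mass near the origin. The difference is that you stop exactly at that point and declare the transfer of the gap from $\nu$ to the $\nu_n$ an unresolved obstacle, so as written your argument is incomplete: you have correctly located the crux but not crossed it.

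The missing ingredient is precisely the paper's Lemma \ref{3.13}: if nondegenerate GGC random variables satisfy $Z_n\overset{w}{\rightarrow}Z$, then their IN numbers converge, $\s_n\rightarrow\s$. Granting this, the proof closes in two lines. Since $\s\neq 0$, there is $n_0$ with $|\s_n|\geq\delta:=|\s|/2$ for all $n\geq n_0$; by Lemma \ref{2.9} the Thorin measure of a GGC assigns no mass to $[0,|\s_n|]$, hence $\nu_n([0,\theta])=0$ and $\nu([0,\theta])=0$ for all $\theta<\delta$, so $g^{(n)}\equiv 0\equiv g$ near the origin and (\ref{condition}) holds trivially with $c=\delta/2$. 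You are right that this cannot be extracted from the weak convergence $\nu_n\rightarrow\nu$ of Theorem \ref{conti} alone, nor from the logarithmic control in (\ref{vv}); vague convergence on $(0,\infty)$ indeed permits $1/t$-weighted mass to leak toward $0$, exactly as you observe. But the additional input you ask for is not an extra hypothesis: the paper derives $\s_n\rightarrow\s$ from weak convergence itself, via the Wiener--Gamma representation $Z_n-\tau_n\overset{d}{=}\int h_n(s)\,d\gamma_s$, the pointwise convergence $h_n\rightarrow h$ of the integrands (Lemma \ref{lem3.10}), the strict positivity of $h$ on $[|\s|,\infty)$ from Lemma \ref{2.9}, and Lemma \ref{KeyLemma} to exclude limit points of $|\s_n|$ strictly inside $(0,|\s|)$. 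So your diagnosis of where the difficulty sits is exactly right, but without this lemma (or a substitute for it) your proposal does not prove the statement.
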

\begin{proof} Let $\s_n$ and  $\nu_n$ denote the IN  and the Thorin measure of $Z_n$ for each $n\geq 1$ respectively. Let $\nu$ denote the Thorin measure of $Z$. From Lemma \ref{3.13} below, we have $\s_n\rightarrow \s$. Therefore there exists positive integer $n_0$ such that $|\s_n|\geq \delta=:|\s|/2$ for all $n\geq n_0$. Define 
$g^{(n)}(\theta)=:\int_0^{\theta}\frac{1}{t}\nu_n(dt)$ and $g(\theta)=:\int_0^{\theta}\frac{1}{t}\nu(dt)$ for any $\theta\geq 0$. By Lemma \ref{2.9}, when $\theta < \delta$ we have $\nu_n([0, \theta])=0$ for all $n\geq n_0$ and  $\nu([0, \theta])=0$. This shows that $g^{(n)}(\theta)=0, \; g(\theta)=0$ for all $\theta<\delta$. Therefore the condition (\ref{condition}) is trivially satisfied with $c=\frac{\delta}{2}$. From these analysis observe that  $U([0, d])=0$ for some $d>0$ if and only if the IN $\s$ of $Z$ satisfies $\s \neq 0$. This completes the proof.
\end{proof}

\begin{lemma}\label{contiboun} Let $\{Z_n\}$ be a family from GGC and assume $Z_n$ converges weakly to $Z$ with  $EZ<\infty$. Let $(\tau_n, \nu_n)$ be the generator of $Z_n$ for each $n$ and let $(\tau, \nu)$ be the generator of $Z$. Assume $\{Z_n\}$
has a sub-sequence $\{Z_{n_k}\}$ such that $EZ_{n_k}<\infty$ for all $k\geq 1$ and the sequence $\{EZ_{n_k}\}_{k\geq 1}$ converges to a finite number, then $EZ_{n_k}\rightarrow EZ$. Especially, if, in addition,  $EZ_n$ is a bounded sequence  then we have $EZ_{n}\rightarrow EZ$.
\end{lemma}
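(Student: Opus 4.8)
The plan is to deduce the statement from Lemma~\ref{contilem} by showing that its hypothesis (\ref{condition}) is automatically satisfied once a subsequence of the means converges to a finite limit. Throughout I work with the subsequence $\{Z_{n_k}\}$; since $Z_n\overset{w}{\rightarrow}Z$, also $Z_{n_k}\overset{w}{\rightarrow}Z$, so the continuity theorem~\ref{conti} applies and gives $\nu_{n_k}\to\nu$ weakly on $(0,\infty)$ together with
\[
\tau=\lim_{M\rightarrow\infty}\lim_{k\rightarrow\infty}\Big[\tau_{n_k}+\int_M^{\infty}\tfrac{1}{t}\nu_{n_k}(dt)\Big].
\]
In particular the inner limit $P(M):=\lim_{k\rightarrow\infty}\big[\tau_{n_k}+\int_M^{\infty}\tfrac{1}{t}\nu_{n_k}(dt)\big]$ exists for all large $M$. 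Since $EZ_{n_k}<\infty$, formula (\ref{LTGGC1}) yields, for any $0<\delta<M$, the decomposition
\[
EZ_{n_k}=\Big[\tau_{n_k}+\int_M^{\infty}\tfrac{1}{t}\nu_{n_k}(dt)\Big]+\int_{\delta}^{M}\tfrac{1}{t}\nu_{n_k}(dt)+\int_0^{\delta}\tfrac{1}{t}\nu_{n_k}(dt),
\]
in which every summand is nonnegative.

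Next I would fix a large $M$ (a continuity point of $\nu$) at which $P(M)$ exists and let $k\rightarrow\infty$. By weak convergence of $\nu_{n_k}$ the middle term converges, $\int_{\delta}^{M}\tfrac{1}{t}\nu_{n_k}(dt)\rightarrow\int_{\delta}^{M}\tfrac{1}{t}\nu(dt)$, for continuity points $\delta<M$. By hypothesis $EZ_{n_k}$ converges to a finite limit $L$, and the first summand converges to $P(M)$, so the last summand must converge as well:
\[
g(\delta):=\lim_{k\rightarrow\infty}\int_0^{\delta}\tfrac{1}{t}\nu_{n_k}(dt)=L-P(M)-\int_{\delta}^{M}\tfrac{1}{t}\nu(dt)
\]
exists and is finite (indeed $g(\delta)\le L$, since the tail integral is bounded by $EZ_{n_k}$ via (\ref{LTGGC1}) and $\tau_{n_k}\ge 0$). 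Taking any $c<M$, this verifies condition (\ref{condition}) for $\{Z_{n_k}\}$ on $[0,c]$, so Lemma~\ref{contilem} applies to the subsequence and gives $EZ_{n_k}\rightarrow EZ$, as desired.

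For the bounded case I would use a standard subsequence argument. If $\{EZ_n\}$ is bounded, take any subsequence; by Bolzano--Weierstrass it has a further subsequence along which the means converge to some finite limit, and the first part of the lemma forces that limit to equal $EZ$. Since every subsequence of $\{EZ_n\}$ thus has a further subsequence converging to $EZ$, the whole sequence satisfies $EZ_n\rightarrow EZ$.

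The main obstacle, and the sole place where finiteness of the limit $L$ is used, is the verification of (\ref{condition}), i.e.\ the convergence of the tail masses $\int_0^{\delta}\tfrac{1}{t}\nu_{n_k}(dt)$. The continuity theorem controls only $\int_0^{\delta}(\ln t)\nu_{n_k}(dt)$, and near the origin $1/t$ dominates $|\ln t|$, so no control of the $1/t$-tail comes for free; it is precisely the convergence of the means that pins this quantity down. The only measure-theoretic care needed is to choose $\delta$ and $M$ among continuity points of $\nu$ so that weak convergence of $\nu_{n_k}$ delivers convergence of the integral over $[\delta,M]$.
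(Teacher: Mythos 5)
Your proposal is correct and follows essentially the same route as the paper's own proof: both reduce the statement to Lemma~\ref{contilem} by using the decomposition of $EZ_{n}$ from (\ref{LTGGC1}) together with Theorem~\ref{conti} to show that convergence of the means forces the limit $g(\delta)=\lim_k\int_0^{\delta}\frac{1}{t}\nu_{n_k}(dt)$ to exist and be finite, and both handle the bounded case by the standard subsequence argument. Your added care about choosing $\delta$ and $M$ among continuity points of $\nu$ is a minor refinement the paper leaves implicit, not a different approach.
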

\begin{proof} Since any sub-sequence of the weakly convergent sequence converges weakly, it is sufficient to prove if  $EZ_n<\infty$ for all $n$ and if $EZ_n$ converges to a finite number then $EZ_n\rightarrow EZ$. For this it is sufficient to show $g^{(n)}(\delta)=:\int_0^{\delta}\frac{1}{t}\nu_n(dt)$ converges point-wise to a finite valued function $g(\delta)$ on $[0, c]$ for some $c>0$ by the above Lemma \ref{contilem}. Fix any number $c>0$ and observe that 
\begin{equation}\label{ezen}
EZ_n=\tau_n+\int_M^{\infty}\frac{1}{t}\nu_n(dt)+\int_{\delta}^M\frac{1}{t}\nu_n(dt)+\int_0^{\delta}\frac{1}{t}\nu_n(dt)
\end{equation}
for any $0<\delta \le c<M<\infty$. By the continuity theorem the limit 
$\lim_{M\rightarrow \infty}\lim_{n\rightarrow}[\tau_n+\int_M^{\infty}\frac{1}{t}\nu_n(dt)]$ exists. Also the limit $\lim_{M\rightarrow \infty}\lim_{n\rightarrow \infty}\int_{\delta}^M\frac{1}{t}\nu_n(dt)=\lim_{M\rightarrow \infty}[\lim_{n\rightarrow \infty}\int_{\delta}^M\frac{1}{t}\nu_n(dt)]=\lim_{M\rightarrow \infty}\int_{\delta}^M\frac{1}{t}\nu(dt)=\int_{\delta}^{\infty}\frac{1}{t}\nu(dt)$ exists for each fixed $0<\delta<c$. By the assumption the sequence $EZ_n$  converges to a  finite limit. Therefore from (\ref{ezen}) the limit of $\lim_{n\rightarrow \infty}\int_0^{\delta}\frac{1}{t}\nu_n(dt)$ exists for each $0<\delta\le c$ and it is finite valued. This completes the proof. To see the second claim of the Lemma 
it is sufficient to show any convergent sub-sequence of $\{EZ_n\}$ converges to the same number $EZ$. Let $EZ_{n_k}$
be a convergent sub-sequence of $\{EZ_n\}$. Since by assumption $\{EZ_n\}$ is a bounded sequence, $EZ_{n_k}$ converges to a finite number. Therefore from the claim in the first half of the Lemma we have $EZ_{n_k}\rightarrow EZ$. This completes the proof for the Lemma.
\end{proof}

The above Lemmas \ref{contilem} and \ref{contiboun} give some sufficient conditions for the convergence of the mean when a sequence of GGC converges weakly to a GGC. Below we use these Lemmas to show that actually weak convergence in the family of GGC random variables imply convergence of the corresponding mean values. Namely we will show that  
the condition (\ref{condition}) in Lemma \ref{contilem} and the boundedness assumption 
of the expected values in Lemma \ref{contiboun} can be dropped. Before we prove this result we need some preparations.

A Gamma distribution $\xi\sim G(\alpha, \beta)$ has density function $g(x)=x^{\alpha-1}e^{-x/\beta}/(\beta^{\alpha}\Gamma(\alpha))$. A right-shifted Gamma distribution is given by $\eta=\xi+\tau$, where $\xi \sim G(\alpha, \beta)$ and $\tau \geq 0$ (see section 1.5 at page 28 of \cite{Hammerstein_EAv_2010} for more details). We use the notation $G(\alpha, \beta, \tau):=Law(Y)$ to denote the right-shifted gamma distributions. In our discussions we need to consider independent sequences $G(\alpha_i, \beta_i, \tau_i)$ of right-shifted Gamma distributions. We write $G(\alpha_i, \beta_i, \tau_i)=\xi_i+\tau_i$, where $\{\xi_i\}_{1\le i\le n }$ is then a sequence of independent gamma random variables with probability density functions
\begin{equation}
f_i(x_i)=f_i(x_i; \alpha_i, \beta_i)=x^{\alpha_i-1}e^{-x_i/\beta_i}/[\beta_i^{\alpha_i}\Gamma(\alpha_i)],\; \alpha_i>0, \; \beta_i>0,\;  x_i>0.    
\end{equation}
We have $\xi_i\sim \beta_i\bar{\xi}_i$ where $\bar{\xi}_i\sim f_i(x_i; \alpha_i, 1)$ and $\{\bar{\xi}_i\}_{1\le i\le n}$ are mutually independent. With these we have  $\bar{Z}=:\bar{\xi}_1+\bar{\xi}_2+\cdots+\bar{\xi}_n\sim f(x; \alpha, 1)$, where $\alpha=\sum_{i=1}^n\alpha_i$. The distribution of $Z=:\xi_1+\xi_2+\cdots+\xi_n$ is not known in closed form. Let $g(x)$ denote the probability density function of $Z$. Denote $\beta_m=:\min_{1\le i\le n}\beta_i$ and for notational simplicity, without loosing any generality,  we can  assume that $\beta_m=\beta_1$. Then, when not all of $\{\beta_i\}$ are equal to each other,  the paper \cite{MOSCHOPOULOS} in its equation (2.12) gives the following bound for $g$
\begin{equation}\label{62}
g(x)\le [C/(\beta_{m}^{\rho}\Gamma(\rho))]x^{\rho-1}e^{-x(1-v)/\beta_m},    
\end{equation}
where
\begin{equation}
C=\prod_{i=1}^n(\beta_m/\beta_i)^{\alpha_i},\; \rho=\sum_{i=1}^n\alpha_i, \; v=\max_{2\le i\le n}(1-\beta_m/\beta_i).    
\end{equation}
\begin{remark} Since for two independent gamma random variables $\xi_1\sim f(x; \alpha_1, \beta_1)$ and $\xi_2\sim f(x; \alpha_2, \beta_2)$ with $\beta=:\beta_1=\beta_2$, we have $\xi_1+\xi_2\sim f(x; \alpha_1+\alpha_2, \beta)$, in (\ref{62}) we can assumed that $\beta_1=\beta_m<\min_{2\le i\le n}\beta_i$.
Also we observe that the numbers $C$ and $\nu$ in (\ref{62}) are bounded numbers.
\end{remark}

In our next result, we will consider sequences 
\begin{equation}\label{Gcon}
\bar{Z}_n=\prod_{i=1}^{k_n} \ast G(\alpha_i^{(n)}, \beta_i^{(n)}, \tau_i^{(n)}), \; \; n\geq 1,
\end{equation}
of finite convolutions of right-shifted gamma distributions, where $k_n, n\geq 1,$ are positive integers. If we denote $\tau^{(n)}=\sum_i^{k_n}\tau_i^{(n)}$, then we have 
\begin{equation}\label{64}
\bar{Z}_n\sim \tau^{(n)}+\xi^{(n)}_1+\xi_2^{(n)}+\cdots+\xi_{k_n}^{(n)}, \; \;   \xi_i^{(n)}\sim f(x; \alpha^{(n)}_i, \beta^{(n)}_i),\;  1\le i\le k_n.
\end{equation}
We denote 
\begin{equation}\label{Zgc}
Z_n=:\xi^{(n)}_1+\xi_2^{(n)}+\cdots+\xi_{k_n}^{(n)}, n\geq 1   
\end{equation}
and 
\begin{equation}\label{65}
C^{(n)}=:\prod_{i=1}^{k_n}(\beta_m^{(n)}/\beta_i^{(n)})^{\alpha_i^{(n)}},\; \rho^{(n)}=:\sum_{i=1}^{k_n}\alpha_i^{(n)}, \; v^{(n)}=\max_{2\le i\le k_n}(1-\beta_m^{(n)}/\beta_i^{(n)}),    
\end{equation}
where $\beta_m^{(n)}=\min_{1\le i\le k_n}\beta^{(n)}_i=\beta_1^{(n)}$ (again $\beta_1^{(n)}$ is assumed to be minimum of $\{\beta_i^{(n)}\}_{1\le i\le k_n}$). Let $g_n(x)$ denote the probability density function of $Z_n$ for each $n\geq 1$. Then assuming that not all of $\beta_i^{(n)}, 1\le i\le k_n$ are equal to each other, we have 
\begin{equation}\label{gn}
 g_n(x)\le [C^{(n)}(\beta_{m}^{(n)})^{-\rho^{(n)}}/\Gamma(\rho^{(n)})]x^{\rho^{(n)}-1}e^{-x(1-v^{(n)})/\beta_m^{(n)}}.   
\end{equation}

Next we prove the following Lemma.
\begin{lemma}\label{3.13} Let $\bar{Z}_n$ be a sequence of finite convolutions of right-shifted gamma distributions. If $\bar{Z}_n$ converges weakly to a non-degenerate random variable $Z$ with $EZ<\infty$,  we have $E\bar{Z}_n \rightarrow EZ$. 
\end{lemma}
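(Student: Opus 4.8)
The plan is to reduce the statement to the boundedness part of Lemma~\ref{contiboun}. Writing $\tau^{(n)}=\sum_{i=1}^{k_n}\tau_i^{(n)}$ and letting $Z_n$ be the driftless part in (\ref{Zgc}), we have $\bar Z_n\overset{d}{=}\tau^{(n)}+Z_n$, hence $E\bar Z_n=\tau^{(n)}+EZ_n$, while $\bar Z_n\ge \tau^{(n)}$ almost surely because $Z_n\ge 0$. Since $\bar Z_n\overset{w}{\rightarrow}Z$, the family $\{\bar Z_n\}$ is tight, so there is an $M$ with $\sup_n P(\bar Z_n>M)<\tfrac12$; as $\tau^{(n)}>M$ would force $P(\bar Z_n>M)=1$, the drifts $\{\tau^{(n)}\}$ are bounded. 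Consequently $\{E\bar Z_n\}$ is bounded if and only if $\{EZ_n\}$ is bounded, and once boundedness is known the second claim of Lemma~\ref{contiboun} yields $E\bar Z_n\rightarrow EZ$ directly. Thus the entire proof collapses to showing that the sequence of means $\{EZ_n\}$ is bounded.

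To bound $\{EZ_n\}$ I would exploit the explicit density estimate (\ref{gn}). First I would simplify the exponential rate there: since $\beta_m^{(n)}=\min_i\beta_i^{(n)}$, one has $1-v^{(n)}=\min_{2\le i\le k_n}\beta_m^{(n)}/\beta_i^{(n)}=\beta_m^{(n)}/\beta_{\max}^{(n)}$ with $\beta_{\max}^{(n)}:=\max_{1\le i\le k_n}\beta_i^{(n)}$, so the bound reads $g_n(x)\le K_n\,x^{\rho^{(n)}-1}e^{-x/\beta_{\max}^{(n)}}$ with $K_n=C^{(n)}(\beta_m^{(n)})^{-\rho^{(n)}}/\Gamma(\rho^{(n)})$. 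Integrating $x\,g_n(x)$ against this gamma-type majorant gives a closed-form estimate for $EZ_n$ in terms of $C^{(n)},\rho^{(n)},\beta_m^{(n)},\beta_{\max}^{(n)}$, so the task becomes controlling these quantities uniformly in $n$. By the continuity theorem \ref{conti} the Thorin measures $\nu_n=\sum_i\alpha_i^{(n)}\delta_{1/\beta_i^{(n)}}$ converge weakly to the Thorin measure $\nu$ of $Z$, and by (\ref{LTGGC1}) we have $EZ_n=\int_0^\infty t^{-1}\nu_n(\ud t)=\sum_i\alpha_i^{(n)}\beta_i^{(n)}$, which relocates the question to the behaviour of $\nu_n$ near the origin.

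The step I expect to be the main obstacle is exactly this uniform control, and it is where the non-degeneracy of $Z$ must enter. The danger is that part of the Thorin mass escapes to $t=0$, i.e.\ that some components have scale $\beta_i^{(n)}\rightarrow\infty$; because of the $t^{-1}$ weight in $EZ_n=\int_0^\infty t^{-1}\nu_n(\ud t)$, such mass can make $\int_0^\delta t^{-1}\nu_n(\ud t)$ blow up even while $\nu_n\overset{w}{\rightarrow}\nu$ (this is precisely the failure of condition (\ref{condition}) in Lemma~\ref{contilem}, and it corresponds to $\beta_{\max}^{(n)}\rightarrow\infty$ in the majorant). I would therefore aim to show that a divergent mean is incompatible with $\bar Z_n$ converging weakly to a \emph{non-degenerate} limit: if the escaping components carried unbounded mean they would either spread the law of $\bar Z_n$, contradicting tightness together with the exponential tail rate $1/\beta_{\max}^{(n)}$ supplied by (\ref{gn}), or force the Thorin mass $\rho^{(n)}$ accumulating near the origin to vanish, pushing $Z$ toward a degenerate law and contradicting the hypothesis (compare Lemma~\ref{2.9}, which already rules out $\nu$ charging a neighbourhood of $0$ when $\s\neq0$). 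Making this dichotomy quantitative---keeping $\beta_{\max}^{(n)}$ bounded and $C^{(n)}$ from decaying too fast simultaneously, using only weak convergence and non-degeneracy---is the technical heart of the argument; once $\{EZ_n\}$ is shown bounded, Lemma~\ref{contiboun} closes the proof.
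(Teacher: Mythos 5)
Your overall reduction is sound and coincides with the paper's: bound the drifts (your tightness argument $\tau^{(n)}\le M$ is a neat, slightly more elementary alternative to the paper's use of the drift formula in Theorem \ref{conti}), observe that everything then hinges on showing $\{EZ_n\}$ is bounded, and invoke the second claim of Lemma \ref{contiboun}. The gap is that you never actually prove this boundedness: you explicitly defer ``making the dichotomy quantitative,'' and that deferred step is precisely the entire content of the paper's proof. As written, the proposal is a plan rather than a proof.

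Beyond incompleteness, the sketched route has two concrete defects that would have to be repaired before it could work. First, you apply the majorant (\ref{gn}) to the whole driftless sum $Z_n$, but weak convergence controls neither $\rho^{(n)}=\sum_i\alpha_i^{(n)}$ nor $\beta_m^{(n)}$ for the whole sum: the total Thorin mass $\nu_n((0,+\infty))$ may diverge (the limit $Z$ can have infinite Thorin mass, e.g.\ a positive stable law), and when the Thorin supports spread toward $+\infty$ one has $\beta_m^{(n)}\to 0$, so the prefactor $C^{(n)}(\beta_m^{(n)})^{-\rho^{(n)}}/\Gamma(\rho^{(n)})$ and the tail rate $1/\beta_M^{(n)}$ in your closed-form estimate are both uncontrolled. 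This is exactly why the paper first factors the Laplace transform over $[0,\delta]$, $(\delta,K]$, $(K,+\infty)$, writes $\bar Z_n$ as an independent sum $\tau^{(n)}+\tilde Z_1^{(n)}+\tilde Z_2^{(n)}+\tilde Z_3^{(n)}$, disposes of the middle and far pieces by weak convergence of the Thorin measures and Theorem \ref{conti}, and applies the density bound only to the near-zero piece, where the local mass $\int_0^\delta U^{(n)}(dt)\to U([0,\delta])<\infty$ and every scale there satisfies $\beta_i^{(n)}\ge 1/\delta$. Second, one horn of your dichotomy is simply wrong: if the Thorin mass near the origin vanishes in the limit, $Z$ is not ``pushed toward a degenerate law'' --- it is a perfectly non-degenerate GGC with $U([0,d])=0$, i.e.\ with IN $\s\neq 0$. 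That situation is not a contradiction but the easy case, and the paper resolves it repeatedly by reducing to Lemma \ref{Corr}, where convergence of means follows directly from Lemma \ref{contilem}. The genuine contradiction argument --- pointwise convergence of the densities $g_n$ to the density of $\tilde Z_1$ (Theorem 4.1.5 of \cite{Bondesson-lecturenotes}) being incompatible with an unbounded mean, together with boundedness of $C^{(n)}$ obtained from $\int_0^\delta(\ln t)U_n(dt)\to 0$ --- lives only in the complementary case where mass genuinely persists near the origin, and it is exactly this case analysis that your sketch leaves open.
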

\begin{proof} Let $(\tau, U)$ denote the generating pair for $Z$. We assume that the 
sequence $\bar{Z}_n$ is given by (\ref{Gcon}). We denote by $U^{(n)}$ the Thorin measure of $\bar{Z}_n$
for each $n\geq 1$. To prove the claim in the Lemma it is sufficient, by Lemma \ref{contiboun},  to prove that $\{E\bar{Z}_n\}$ is a bounded sequence. Below we divide the proof into two parts a) and b).

(a) For arbitrarily fixed positive numbers $0<\delta<K<+\infty$, we define
\begin{equation}
\begin{split}
&I^{(n)}_{0,\delta}=\{i: \frac{1}{\beta_i^{(n)}}\le \delta,\; 1\le i\le k_n\}, \\
&I^{(n)}_{\delta, K}=\{i: K\geq \frac{1}{\beta_i^{(n)}}> \delta,\;  1\le i\le k_n\}, \\
&I^{(n)}_{K, +\infty}=\{i: \frac{1}{\beta_i^{(n)}}> K,\; 1\le i\le k_n\},
\end{split}
\end{equation}
and we write $\bar{Z}_n= \tau^{(n)}+\tilde{Z}_1^{(n)}+\tilde{Z}_2^{(n)}+\tilde{Z}_3^{(n)}$, where $\tilde{Z}_1^{(n)}=\sum_{i\in I_{0, \delta}^{(n)}}\xi_i^{(n)}, \tilde{Z}_2^{(n)}=\sum_{i\in I_{\delta, K}^{(n)}}\xi_i^{(n)},$ and $\tilde{Z}_3^{(n)}=\sum_{i\in I_{K, +\infty}^{(n)}}\xi_i^{(n)}$. We write the Laplace transformation of $Z$ as $ \mathcal{L}_Z(s)=\mathcal{L}_1\mathcal{L}_2\mathcal{L}_3$ with 
\begin{equation}\label{69}
 \mathcal{L}_1=e^{\int_0^{\delta}log(\frac{1}{1+s/t})U(dt)} , \mathcal{L}_2=e^{\int_{\delta}^{K}log(\frac{1}{1+s/t})U(dt)}, \mathcal{L}_3=e^{\tau+\int_{K}^{+\infty}log(\frac{1}{1+s/t})U(dt)},   
\end{equation}
and denote by $\tilde{Z}_1$ the GGC random variable that corresponds to $\mathcal{L}_1$, and by $\tilde{Z}_2$ the GGC random variable that corresponds to $\mathcal{L}_2$, and similarly by $\tilde{Z}_3$ the GGC that corresponds to $\mathcal{L}_3$ (see for a similar idea in the proof of Theorem 3.1 at page 130 of \cite{Bondesson-scandinal-actuarial-j}). The Laplace distribution of $Z$ is the multiplication of three Laplace transformations and therefore $Z$ is equal in distribution to 
the independent sum of  $\tilde{Z}_1, \tilde{Z}_2$, and $\tilde{Z}_3$, i.e., $Z\sim \tilde{Z}_1+\tilde{Z}_2+\tilde{Z}_3$. We also have that $\tilde{Z}_1^{(n)}$ converges weakly to $\tilde{Z}_1$, $\tilde{Z}_2^{(n)}$ converges weakly to $\tilde{Z}_2$, and $\tau_n+\tilde{Z}_3^{(n)}$ converges weakly to $\tilde{Z}_3$. To see this, observe that the Laplace distribution of $\tilde{Z}_1^{(n)}$ does not contribute to $\mathcal{L}_2$ and $\mathcal{L}_3$ in the limit due to the restriction in the set $I_{0, \delta}^{(n)}$ above. Therefore its Laplace distribution need to converge $\mathcal{L}_1$, which in turn implies  $\tilde{Z}_1^{(n)}$ converges weakly to $\tilde{Z}_1$. Similar arguments hold for the other two.

To show $\{E\bar{Z}_n\}$ is a bounded sequence, it is sufficient to show all of  $E\tilde{Z}_1^{(n)}, E\tilde{Z}_2^{(n)},$ and $\tau_n+E\tilde{Z}_3^{(n)}$ are bounded sequences. The rest of the proof is devoted to show this claim. First observe that
\begin{equation}
E(\tau_n+\tilde{Z}_3^{(n)})=\tau_n+\int_K^{+\infty}\frac{1}{t}U^{(n)}(dt).    
\end{equation}
By the continuity Theorem \ref{conti} we have 
\[
\lim_{K\rightarrow +\infty}\lim_{n+\infty}[\tau_n+\int_K^{+\infty}\frac{1}{t}U^{(n)}(dt)]=\tau<\infty.
\]
Therefore for a fixed number $\epsilon>0$ there exists positive integer $n_0$ and positive number $K_0$ such that $\tau_n+\int_{K_0}^{+\infty}\frac{1}{t}U^{(n)}(dt)$ lies in the interval $[\tau-\epsilon, \tau+\epsilon]$ when $n\geq n_0$. For the simplicity of notations we denote $K_0$  by $K$ and the $\{\tilde{Z}_3^{(n)}\}$ (these are defined to be $\tilde{Z}_3^{(n)}=\sum_{i\in I_{K_0, +\infty}^{(n)}}\xi_i^{(n)}$ in fact and we use $K$ instead of $K_0$ for notational simlicity) are defined as above. We conclude that  $E(\tau_n+Z_3^{(n)})$ is a bounded sequence of $n$. Also we have $EZ_2^{(n)}=\int_{\delta}^K\frac{1}{t}U^{(n)}(dt)$ and as $Z_n$ converges weakly to $Z$, the measure $U^{(n)}$ converges weakly to $U$ by the continuity Theorem \ref{conti}. Therefore 
$EZ_2^{(n)}=\int_{\delta}^{K}\frac{1}{t}U^{(n)}(dt)\rightarrow \int_{\delta}^{K}\frac{1}{t}U(dt)<\infty$. From this we conclude that $EZ_2^{(n)}$ is also a bounded sequence of $n$. Next we show that $EZ_1^{(n)}$ is a bounded sequence of $n$ also and we do this in part b) below.

b) Recall that $\tilde{Z}_1^{(n)}=\sum_{i\in I_{0, \delta}^{(n)}}\xi_i^{(n)}$. We have $E\tilde{Z}^{n}_1=\int_0^{\delta}\frac{1}{t}U^{(n)}(dt)$. Here we can't apply the same idea that we have used for $\{Z_2^{(n)}\}$ and $\{Z_3^{(n)}\}$ in part a) above as we don't know if $\int_0^{\delta}\frac{1}{t}U^{(n)}(dt)\rightarrow \int_0^{\delta}\frac{1}{t}U(dt)$ holds true. However we have an 
upper bound as in (\ref{62}) for the density functions of finite gamma convolutions and we use this fact to show that $EZ_1^{(n)}$ is a bounded sequence. First observe that
$\int_0^{\delta}U^{(n)}(dt)=\sum_{i\in I_{0, \delta}^{(n)}}\alpha_i^{(n)}\rightarrow \int_0^\delta U(dt)<\infty$ as $U^{(n)}$ weakly converges to $U$. Therefore $\rho_1^{(n)}=:\sum_{i\in I_{0, \delta}^{(n)}}\alpha_i^{(n)}$ is a bounded sequence of $n$. We have 
$EZ_1^{(n)}=\sum_{i\in I_{0, \delta}^{(n)}}\alpha_i^{(n)}\beta_i^{(n)}.$ Now if $\{\beta_i^{(n)}, i\in I^{(n)}_{0, \delta}\}_{n\geq 1}$ is a uniformly  bounded family of $n$, then clearly we have $EZ_1^{(n)}$ is a bounded sequence of $n$. Therefore we exclude this case from our discussions below and we assume that $\{\beta_i^{(n)}, n\geq 1, i\in I^{(n)}_{0, \delta}\}$ is an unbounded family of $n$.

Below for the sake of notations we assume that the family $\{Z_1^{(n)}\}$ is given by (\ref{Zgc}) and we denote this family, with abuse of notations,  by $\{Z_n\}$. So we have $Z_n$ converges weakly to $\tilde{Z}_1$ in part a) above. The corresponding probability density functions have the upper bounds as in (\ref{gn}). We use the same notations $\rho^{(n)}, \beta_m^{(n)},$
as in (\ref{65}). We also define $\beta_M^{(n)}=\max_{1\le i\le k_n}\beta_i^{(n)}$. We divide the family $\{Z_n\}_{n\geq 1}$ (namely the family $\{Z_1^{(n)}\}$) into two disjoint sets $\{Z_n\}_{n\geq 1}=\{Z_n'\}_{n\geq 1}\cup \{Z_n^{''}\}_{n\geq 1}$, where each member of 
$\{Z_n'\}$ is finite gamma convolutions with not all $\{\beta'_i\}$ are equal to each other and the family $\{Z_n^{''}\}$ is such that each member of it is finite gamma convolutions with equal $\beta^{''}_i$. Both of $Z_n^{'}$ and $Z_n^{''}$ (being sub-sequences of $Z_n$, more precisely of $Z_1^{(n)}$) converge weakly to $\tilde{Z}_1$. 

First consider the sequence $Z_n^{''}$. It is clear that, in fact, each $Z_n^{''}$ are single gamma random variables with $Z_n^{''}\sim G(\rho{''}_n, \beta^{''}_n)$ for some $\rho_n^{''}$ and $\beta_n^{''}$. Clearly $\rho_n^{''}$ is a sub-sequence of $\rho_1^{(n)}$ defined in the above paragraph. As such $\rho_n^{''}$ is a bounded sequence. Now assume $\rho_n^{''}\rightarrow 0$. Let $U_n^{''}$ denote the Thorin measure of $Z_n^{''}$ for each $n\geq 1$. Then since $Z_n^{''}$ weakly converges to $\tilde{Z}_1$, we have $\int_0^{\delta}U(dt)=\lim_{n\rightarrow +\infty}\int_0^{\delta}U_n^{''}=\lim_{n\rightarrow +\infty}\rho_n^{''}=0$, which shows $U([0, \delta])=0$. But if this is true then we have $E\bar{Z}_n\rightarrow EZ$ already by Lemma \ref{Corr}. So we assume $\inf_n\rho_n^{''}>0$ below. 
Now if $EZ_n^{''}$ is an unbounded sequence then we should have  $\beta_n^{''}\rightarrow +\infty$ as $EZ_n^{''}=\rho_n^{''}\beta_n^{''}$ and $\rho_n^{''}$ is a bounded sequence from below and above as explained above. Then the sequence of probability density functions $f_n^{''}(x)=x^{\rho_n^{''}-1}e^{-x/\beta_n^{''}}/[(\beta_n^{''})^{\rho_n^{''}}\Gamma(\rho_n^{''})]$ of $Z_n^{''}$ converges to zero almost surely. But at the same time  $f_n^{''}(x)$ should converge to the density function of $\tilde{Z}_1$ by Theorem 4.1.5 of \cite{Bondesson-lecturenotes} implying that the probability density function of $\tilde{Z}_1$ is zero, a contradiction. Therefore $EZ_n{''}$ is a bounded sequence.

Now it remains to show that $EZ_n^{'}$ is a bounded sequence. Below, with another abuse of notation, we denote the family $\{Z_n'\}$ by $\{Z_n\}$ and show that $EZ_n$ is a bounded sequence. The probability density function of $Z_n$ is denoted by $g_n(x)$ for each $n$. We have 
\begin{equation}\label{important}
 g_n(x)\le [C^{(n)}(\beta_{m}^{(n)})^{-\rho^{(n)}}/\Gamma(\rho^{(n)})]x^{\rho^{(n)}-1}e^{-x(1-v^{(n)})/\beta_m^{(n)}}.   
\end{equation}
In (\ref{important}), the family $\{\rho^{(n)}\}_{n\geq 1}$ is uniformly bounded as explained above. Since the Thorin measure of $\tilde{Z}_1^{(n)}$ has support in the interval $[0, \delta]$, we have 
$\frac{1}{\beta_m^{(n)}}\le \delta$ for all $n$ and this gives a lower bound $\beta_m^{(n)}\geq \frac{1}{\delta}$ for the family $\{\beta_m^{(n)}\}_{n\geq 1}$. Also we can assume that the family $\{\beta_m^{(n)}\}_{n\geq 1}$ is uniformly bounded. If not, since the Thorin measure $U^{(n)}|_{[0, \delta]}$ (the restriction of the Thorin measure of $\bar{Z}_n$ to $[0, \delta]$)  has support in $[\frac{1}{\beta_M^{(n)}}, \frac{1}{\beta_m^{(n)}}]$, we have $U^{(n)}([0, \delta])\rightarrow 0$ as $n\rightarrow 0$. This implies that $U([0, \delta])=0$ and we are reduced to the case of Lemma \ref{Corr} and so we have  $E\bar{Z}_n\rightarrow EZ$ trivially. Therefore we assume $\{\beta_m^{(n)}\}_{n\geq 1}$ is uniformly bounded. Also, as explained above,  we can assume that $\beta_M^{(n)}$ is an unbounded sequence (this corresponds to the Thorin measure $U$ has support in any close neighborhoods of $0$) as if it is bounded sequence then the Thorin measure $U$ satisfies $U([0, \inf_n{(1/\beta_M^{(n)})}])=0$ with $\inf_n{(1/\beta_M^{(n)})}>0$ and again we are reduced to the case of Lemma \ref{Corr}.

Now we look at $v^{(n)}=\max_{2\le i\le k_n}(1-\beta_m^{(n)}/\beta_i^{(n)})$ in (\ref{important}). As explained above we can assume that $\{\beta_m^{(n)}\}$ is a bounded family and $\{\beta_M^{(n)}\}$ is an unbounded family. Therefore there exists $n_0$ such that for all $n\geq n_0$ we have $v^{(n)}\le \frac{1}{2}$. Therefore when $n\geq n_0$ we have $(1-v^{(n)})/\beta_m^{(n)}\geq \beta_m^{(n)}\geq \frac{1}{2B}$, where $B$ is any fixed  upper bound for $\{\beta_m^{(n)}\}$. Also observe that 
\[
\ln C^{(n)}=\beta_m^{(n)}\{\sum_{i\in I_{0, \delta}^{(n)}}\alpha_i^{(n)}\}-\sum_{i\in I_{0, \delta}^{(n)}}\alpha_i^{(n)}\ln \beta_i^{(n)}.
\]
Since $\sum_{i\in I_{0, \delta}^{(n)}}\alpha_i^{(n)}\ln \beta_i^{(n)}=\int_0^{\delta}(\ln t)U_n(dt)\rightarrow 0$ by the continuity Theorem \ref{conti}, we can conclude that $\{C^{(n)}\}_{n\geq 1}$ is a bounded sequence. Therefore for all $n\geq n_0$  we have
\begin{equation}\label{733}
g_n(x)\le \bar{C}x^{\bar{\rho}-1}e^{-\bar{B}x}, \end{equation}
where $\bar{B}=\frac{1}{2B}$, $\bar{\rho}$ is any fixed upper bound for $\rho^{(n)}$, and $\bar{C}$ is any fixed  upper bound for the family  $\{C^{(n)}(\beta_{m}^{(n)})^{-\rho^{(n)}}/\Gamma(\rho^{(n)})\}_{n\geq 1}$. Next, we show that (\ref{733}) implies that $EZ_n$ is a bounded sequence. To see this, recall that $Z_n$ converges weakly to $\tilde{Z}_1$ and $E\tilde{Z}_1<\infty$. Let $g(x)$ denote the probability density function of $\tilde{Z}_1$. By Theorem 4.1.5 of \cite{Bondesson-lecturenotes}, the sequence $g_n(x)$ converges point-wise to $g(x)$.
We have the following
\begin{equation}\label{2L}
|EZ_n-E\tilde{Z}_1|\le \int_0^{+\infty}x|g_n(x)-g(x)|dx=J_1^{(n)}(L)+J_2^{(n)}(L),
\end{equation}
where $J_1^{(n)}(L)=:\int_0^Lx|g_n(x)-g(x)|dx$ and $J_2^{(n)}(x)=:\int_L^{+\infty}x|g_n(x)-g(x)|dx$ and $L$ can be  any positive number but we require $L>1$ . If we can show that both $\{J_1^{(n)}(L)\}_{n\geq 1}$ and $\{J_2^{(n)}(L)\}_{n\geq 1}$ are bounded family then from (\ref{2L}) we can conclude that $\{EZ_n\}$ is a bounded family. For each fixed $L>1$, we have $J_1^{(n)}(L)$ converges to zero by the dominated convergence Theorem. For $J_2(L)$ we have
\[
J_2(L)\le \int_L^{+\infty}xg_n(x)dx+\int_L^{+\infty}xg(x)dx.
\]
We have $\int_L^{+\infty}xg(x)dx\le E\tilde{Z}_1\le EZ<\infty$ for all $L> 1$. For $\int_L^{+\infty}xg_n(x)dx$ we have 
\[
\int_L^{+\infty}xg_n(x)dx\le \bar{C}\int_L^{+\infty}x^{\bar{\rho}}e^{-\bar{B}x}dx,
\]
due to (\ref{733}) whenever $n\geq n_0$. By applying integration by parts multiple times  we can obtain $\int_L^{+\infty}x^{\bar{\rho}}e^{-\bar{B}x}dx\le Q(L)e^{-\tilde{B}L}$, where $Q(L)$ is a polynomial of $L$ (here we need to use the requirement $L>1$). But $Q(L)e^{-\tilde{B}L}\rightarrow 0$ as $L\rightarrow +\infty$ for any polynomial $Q(L)$ of $L$. Therefore for a given finite number $M>0$, we have a positive number $L_0>1$ such that $Q(L)e^{-\bar{B}L}\le M$ for all $L\geq L_0$.
Then $J_2^{(n)}(L_0)\le EZ+M$ for all $n\geq n_0$. We also have $J_1^{(n)}(L_0)\rightarrow 0$ as explained above. Therefore there exists 
a positive integer $n_0'$ such that $J_1^{(n)}(L_0)\le M$ for all $n\geq n_0'$. Then for all $n\geq \max\{n_0, n_0'\}$ we have 
\[
|EZ_n-E\tilde{Z}_1|\le 2M+EZ.
\]
This shows that $\{EZ_n\}$ is a bounded sequence as $E\tilde{Z}_1<\infty$. This completes the proof.
\end{proof}

\begin{remark}\label{3.14} The above Lemma \ref{3.13} shows that if the members of the sequence $\{Z_n\}$ are finite convolutions of right shifted gamma random variables, then the weak convergence of $Z_n$ to a non-degenerate random variable $Z$ with $EZ<\infty$  implies the convergence of the mean value, i.e, $EZ_n \rightarrow EZ$. Here, unlike Lemma  \ref{contiboun}, we don't have to require the boundedness of the sequence $EZ_n$. As the proof of this Lemma shows the properties of the finite gamma convolutions play important role for the proof of this Lemma. We wish to show a similar result for the general class of GGC random variables. For this, we need to use a result from the paper \cite{rao}: let $Z_n$ be a sequence of random variables with the Laplace transformations $\mathcal{L}_{Z_n}(s)$ and $Z$ be a random variable with Laplace transformation $\mathcal{L}_Z(s)$. Then, according to \cite{rao},  the sequence  $\mathcal{L}_{Z_n}(s)$ converges to $\mathcal{L}_Z(s)$ point-wise on some interval $(c, d)$ if and only if $Z_n$ converges weakly to $Z$ and $\sup_n\mathcal{L}_{Z_n}(s)<\infty$ for all $s\in (c, d)$. Here $(c, d)$ can be any open interval. Also we need to use the fact that if a sequence of  monotone continuous functions $g_n(x)$ converge point-wise to a continuous function $g(x)$ on a compact interval $[c, d]$, then the convergence is uniform on $[c, d]$. 
\end{remark}

\begin{theorem}\label{keytheorem} Let $Z_n$ be a sequence of non-degenerate random variables from GGC with $EZ_n<\infty$ for each $n\geq 1$. Assume $Z_n$
weakly converges to a non-degenerate random variable $Z$ with $EZ<\infty$. Then we have $EZ_n\rightarrow EZ$. 
\end{theorem}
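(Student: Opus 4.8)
The plan is to reduce the general GGC case to the case of finite convolutions of right-shifted gamma distributions already settled in Lemma \ref{3.13}, by approximating each $Z_n$ from within the gamma-convolution family and then running a diagonal argument. The point is that Lemma \ref{3.13} already forbids the mean from escaping to infinity along any weakly convergent sequence of finite gamma convolutions whose limit is non-degenerate with finite mean; what is missing is to transfer this rigidity from the gamma-convolution family to a general GGC. Since the troublesome quantity is the contribution $\int_0^{\delta}\frac1t\nu_n(dt)$ of the Thorin measure near the origin (recall from Remark \ref{38} that $EZ_n=\tau_n+\int_0^{\infty}\frac1t\nu_n(dt)$), and since this is exactly what the density estimate in the proof of Lemma \ref{3.13} controls, a reduction to Lemma \ref{3.13} is the natural route.

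Concretely, I would first fix $n$ and approximate $Z_n$ by finite right-shifted gamma convolutions. Writing the generator of $Z_n$ as $(\tau_n,\nu_n)$, I discretize $\nu_n$ into finite atomic measures $\nu_n^{(m)}=\sum_j\alpha_j^{(m)}\delta_{t_j^{(m)}}$; keeping the drift $\tau_n$, the resulting random variable $Z_{n,m}$ has Laplace transform $e^{-\tau_n s}\prod_j(1+s/t_j^{(m)})^{-\alpha_j^{(m)}}$ and is therefore a finite convolution of right-shifted gamma distributions of the form (\ref{Gcon}). For each fixed $s>0$ one has $\int\log(1+s/t)\nu_n^{(m)}(dt)\to\int\log(1+s/t)\nu_n(dt)$ (the integrand is $\nu_n$-integrable thanks to $\int_0^1|\log t|\nu_n(dt)<\infty$ and $\int_1^{\infty}\frac1t\nu_n(dt)<\infty$), so $\mathcal{L}_{Z_{n,m}}(s)\to\mathcal{L}_{Z_n}(s)$ pointwise on $(0,\infty)$. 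Because $\sup_m\mathcal{L}_{Z_{n,m}}(s)\le 1$ on $(0,\infty)$, the criterion of \cite{rao} recorded in Remark \ref{3.14} yields $Z_{n,m}\to Z_n$ weakly as $m\to\infty$, and the uniform-convergence-of-monotone-functions fact from the same remark is what lets me pass cleanly from pointwise convergence of the monotone log-Laplace transforms to this weak convergence. Since $Z_n$ is non-degenerate with $EZ_n<\infty$, Lemma \ref{3.13} applied to the sequence $\{Z_{n,m}\}_m$ then gives $EZ_{n,m}\to EZ_n$ as $m\to\infty$.

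With these per-$n$ approximations in hand I would run a diagonal argument. For each $n$ choose $m(n)$ so large that the finite gamma convolution $W_n:=Z_{n,m(n)}$ satisfies $d_{FM}(W_n,Z_n)<1/n$ and $|EW_n-EZ_n|<1/n$, which is possible since $d_{FM}(Z_{n,m},Z_n)\to0$ and $EZ_{n,m}\to EZ_n$ as $m\to\infty$. Then $d_{FM}(W_n,Z)\le d_{FM}(W_n,Z_n)+d_{FM}(Z_n,Z)\to0$, so the finite gamma convolutions $W_n$ converge weakly to the non-degenerate limit $Z$ with $EZ<\infty$. A second application of Lemma \ref{3.13}, now to the sequence $\{W_n\}$, gives $EW_n\to EZ$, whence $|EZ_n-EZ|\le|EZ_n-EW_n|+|EW_n-EZ|<1/n+|EW_n-EZ|\to0$, which is the claim. (Equivalently, the same two facts show that $\{EZ_n\}$ is bounded, so one could instead finish by invoking Lemma \ref{contiboun}.)

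The main obstacle is not the diagonal bookkeeping but the legitimacy of the two appeals to Lemma \ref{3.13}: everything hinges on the fact, proved there via the density estimate (\ref{gn}) for finite gamma convolutions, that the near-origin Thorin mass $\int_0^{\delta}\frac1t\nu(dt)$ cannot carry the mean off to infinity while the laws converge weakly to a non-degenerate limit. The delicate point inside the approximation step is to ensure that the discretizations $\nu_n^{(m)}$ faithfully reproduce both the weak limit and the integral $\int_0^{\infty}\frac1t\nu_n(dt)=EZ_n-\tau_n$; here the continuity Theorem \ref{conti} and the uniform convergence of the monotone partial-integral functions $g^{(n)}(\delta)=\int_0^{\delta}\frac1t\nu_n(dt)$ are the tools I expect to need to make this faithful.
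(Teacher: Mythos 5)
Your proposal is correct and follows essentially the same route as the paper: approximate each $Z_n$ by finite convolutions of right-shifted gamma distributions, apply Lemma \ref{3.13} for each fixed $n$ to match the means, diagonalize to produce a sequence of finite gamma convolutions converging weakly to $Z$, and apply Lemma \ref{3.13} once more. The only cosmetic differences are that you certify weak convergence of the diagonal sequence via the Fortet--Mourier metric where the paper uses the sup-distance of Laplace transforms on $[0,1]$ together with the criterion of \cite{rao}, and that you close with a direct triangle inequality where the paper first rules out $EZ_n\rightarrow+\infty$ and then invokes Lemma \ref{contiboun}; your ending is in fact slightly cleaner.
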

\begin{proof} Let $\{Z^{(n)}_k\}_{k\geq 1}$ denote a sequence of finite  convolutions of right-shifted  gamma distributions that converges weakly to $Z_n$ for each $n$. Let $g_k^{(n)}(s)=:\mathcal{L}_{Z^{(n)}_k}(s)$ denote the Laplace transformations of $Z_k^{(n)}$ respectively for all $k\geq 1, n\geq 1$.
Let $g^{(n)}(s)=:\mathcal{L}_{Z_n}(s)$ denote the Laplace transformation of $Z_n$ for each $n\geq 1$. Let $g(s)=:\mathcal{L}_{Z}(s)$ denote the Laplace transormation of the limit random variable $Z$. For each fixed $n$, we have $g_k^{(n)}(s)$ converges point-wise to $g^{(n)}(s)$ on the compact interval $[0, 1]$ due to weak convergence of $Z_k^{(n)}$ to $Z^{(n)}$. Since all of  $g_k^{(n)}(s), g^{(n)}(s)$ are decreasing functions on $[0, 1]$, the convergence is uniform. Also, by following a similar argument, we have $g^{(n)}(s)$ converges uniformly to $g(s)$ on $[0, 1]$. Now, from above Lemma \ref{3.13}, we have $EZ_k^{(n)}\rightarrow EZ_n$ for each fixed $n\geq 1$. Therefore  for each fixed $n\geq 1$, we can pick a finite gamma convolution $Z_{k_n}^{(n)}$ with the property
\begin{equation}\label{68}
 |EZ_n-EZ_{k_n}^{n}|\le \frac{1}{n}, \; \; \sup_{s\in [0, 1]}|\mathcal{L}_{Z_n}(s)-\mathcal{L}_{Z_{k_n}^{(n)}}(s)|\le \frac{1}{n}, 
\end{equation}
 at the same time. We now show that the sequence $Z_{k_n}^{(n)}$ converges weakly to $Z$. Let $\epsilon>0$ be an arbitrary small number.
 Since $Z_n$ weakly converges to $Z$, there exists a positive integer $n_0$ such that $\sup_{s\in [0, 1]}|\mathcal{L}_{Z}(s)-\mathcal{L}_{Z_{n}}(s)|\le \epsilon$ and $\frac{1}{n}<\epsilon$ at the same time for all $n\geq n_0$. Then for all $n\geq n_0$ we have 
 \begin{equation}
 \begin{split}
  \sup_{s\in [0, 1]}|\mathcal{L}_{Z}(s)-\mathcal{L}_{Z_{k_n}^{(n)}}(s)|\le &  \sup_{s\in [0, 1]}|\mathcal{L}_{Z}(s)-\mathcal{L}_{Z_n}(s)|+\sup_{s\in [0, 1]}|\mathcal{L}_{Z_n}(s)-\mathcal{L}_{Z_{k_n}^{(n)}}(s)|,\\
  \le& 2\epsilon.
  \end{split}
 \end{equation}
 Since $\epsilon>0$ is an arbitrary small number, we conclude that $\mathcal{L}_{Z_{k_n}^{(n)}}(s)$ converges to $\mathcal{L}_{Z}(s)$ point-wise on $(0, 1)$. Then by Remark \ref{3.14}, the sequence of finite gamma convolutions $Z_{k_n}^{(n)}$ converges weakly to $Z$. Now if we have $EZ_n\rightarrow +\infty$, then due to (\ref{68}) we have $EZ_{k_n}^{(n)}\rightarrow +\infty$ also. But this contradicts with Lemma \ref{3.13} above.  Therefore $\{EZ_n\}$ is a bounded sequence. Then $EZ_n\rightarrow EZ$ follows from  Lemma \ref{contiboun}.
\end{proof}

\begin{lemma}\label{lem3.10} Let $\{Z_n\}$ be a family from GGC with corresponding generating pairs $\{(\tau_n, \nu_n)\}$. Assume $Z_n$ converges weakly to $Z$ with generating pair $(\tau, \nu)$. Let $Z_n-\tau_n\overset{d}{=}\int_0^{\infty}h_n(s)d\gamma_s$, $Z-\tau\overset{d}{=}\int_0^{\infty}h(s)d\gamma_s$ denote  Wiener-Gamma representations with unique increasing functions $h_n(s), h(s)$. Then $h_n(s)\rightarrow h(s)$ pointwise.
\end{lemma}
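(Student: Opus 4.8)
The plan is to reduce the assertion to a convergence statement for generalized inverses and then feed that machinery with the weak convergence of Thorin measures provided by Theorem~\ref{conti}. By Lemma~\ref{2.9} the functions appearing in the Wiener--Gamma representations are explicit: writing $F_{\nu_n}(x)=\nu_n((0,x])$ and $F_\nu(x)=\nu((0,x])$ for the cumulative functions of the Thorin measures (the base point $0$ is harmless since the measures are supported in $(0,\infty)$), one has $h_n=1/F_{\nu_n}^{-1}$ and $h=1/F_\nu^{-1}$, where $F_{\nu_n}^{-1},F_\nu^{-1}$ are the right-continuous generalized inverses. Since $Z_n\overset{w}{\rightarrow}Z$, Theorem~\ref{conti} yields $\nu_n\overset{w}{\rightarrow}\nu$. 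Hence it suffices to establish, in order, (i) $F_{\nu_n}(x)\to F_\nu(x)$ at every continuity point $x$ of $F_\nu$, (ii) the passage $F_{\nu_n}^{-1}(s)\to F_\nu^{-1}(s)$ at continuity points $s$ of $F_\nu^{-1}$, and (iii) the passage to reciprocals.

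Step (i) is the technical heart. The difficulty is that the Thorin measures are not probability measures: the mass may be infinite at $+\infty$, so the cumulative functions are unbounded, and one must control the based-at-$0$ value. Fix a continuity point $x$ of $F_\nu$ and $\e>0$. Using the third conclusion of Theorem~\ref{conti}, namely $\lim_{\delta\to0}\lim_{n\to\infty}\int_0^\delta(\ln t)\,\nu_n(dt)=0$, together with the elementary bound $\nu_n((0,\delta])\le |\ln\delta|^{-1}\int_0^\delta|\ln t|\,\nu_n(dt)$ valid for $\delta<1$, I can choose a small continuity point $\delta$ of $\nu$ for which $\limsup_n\nu_n((0,\delta])<\e$ and $\nu((0,\delta])<\e$; here finiteness of $\nu((0,\delta])$ also uses $\int_0^1|\log t|\,\nu(dt)<\infty$, which forces finite $\nu$-mass near $0$. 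On the complementary interval $(\delta,x]$, whose endpoints are continuity points, weak convergence gives $\nu_n((\delta,x])\to\nu((\delta,x])$. Splitting $F_{\nu_n}(x)=\nu_n((0,\delta])+\nu_n((\delta,x])$ and letting $\e\downarrow0$ then yields $F_{\nu_n}(x)\to F_\nu(x)$.

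Step (ii) is the classical equivalence between convergence of nondecreasing functions and convergence of their generalized inverses: if $F_{\nu_n}\to F_\nu$ at all continuity points of $F_\nu$, then $F_{\nu_n}^{-1}(s)\to F_\nu^{-1}(s)$ at every continuity point $s$ of $F_\nu^{-1}$, by the usual sandwiching $F_\nu^{-1}(s)-\e\le F_{\nu_n}^{-1}(s)\le F_\nu^{-1}(s)+\e$ at continuity-point levels; this requires no integrability. For step (iii), since $\nu$ has no atom at $0$ we have $F_\nu^{-1}(s)>0$ for every $s>0$, so $x\mapsto 1/x$ is continuous at $F_\nu^{-1}(s)$ and therefore $h_n(s)=1/F_{\nu_n}^{-1}(s)\to 1/F_\nu^{-1}(s)=h(s)$. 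As $h$ is monotone its discontinuities form a countable set, so this delivers the asserted pointwise convergence at all but countably many $s$.

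The main obstacle I anticipate is precisely step (i): marshalling the non-finiteness of the Thorin measures so that the based-at-$0$ cumulative functions converge, which hinges on extracting uniform smallness of $\nu_n((0,\delta])$ from the iterated-limit statement of Theorem~\ref{conti}. A secondary point requiring care is the boundary behaviour when $\nu$ has finite total mass $M=\nu((0,\infty))$: then $F_\nu^{-1}$ jumps to $+\infty$ at $s=M$, so $s=M$ is a discontinuity of $h$ and is legitimately excluded, while for $s>M$ the value $h(s)=0$ is recovered because the mass of $\nu_n$ accounting for levels above $M$ either is absent or escapes to large $t$, forcing $h_n(s)\to0$. Once these endpoint effects are absorbed, steps (ii) and (iii) are standard facts about monotone functions and their inverses.
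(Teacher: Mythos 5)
Your proposal follows essentially the same route as the paper's own proof: invoke Theorem~\ref{conti} to get $\nu_n\overset{w}{\rightarrow}\nu$, identify $h_n=1/F_{\nu_n}^{-1}$ and $h=1/F_{\nu}^{-1}$ via Proposition~1.1 of the James--Roynette--Yor survey, and pass from convergence of the cumulative functions to convergence of their generalized inverses and then to reciprocals. The only difference is that you fill in the details the paper leaves implicit --- in particular the control of $\nu_n((0,\delta])$ via the logarithmic integrability condition and the (correct) caveat that the convergence of generalized inverses holds off the countable set of discontinuities of $F_\nu^{-1}$ --- so your write-up is a more careful version of the same argument.
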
\label{hhn}
\begin{proof} By the continuity Theorem \ref{conti}, $\nu_n$ weakly converges to $\nu$. Denote 
$F_{\nu_n}(x)=\int_{(0, x]}\nu_n(dy)$ and $F_{\nu}(x)=\int_{(0, x]}\nu(dy)$ and denote by 
$F_{\nu_n}^{-1}, F_{\nu}^{-1}$ their respective right-continuous inverses. From part 2 of Proposition 1.1 of \cite{Mark-Yor-GGC} we have 
$h_n(s)=1/F_{\nu_n}^{-1}$ and $h(s)=1/F_{\nu}^{-1}$. Now from Theorem \ref{conti}, weak convergence implies $F_{\nu_n}\rightarrow F_{\nu}$ pointwise and this in turn implies $F_{\nu_n}^{-1}\rightarrow F_{\nu}^{-1}$. Therefore we have $h_n(s)\rightarrow h(s)$ pointwise.
\end{proof}

Recall that our goal is to discuss the robustness problem of the optimal portfolio in (\ref{themainn}). The optimal portfolio in this theorem involves the Laplace transformation of the mixing distribution. Therefore, we first need to study the properties of the Laplace transformation of the GGC random variables. Especially, we would like to study the relation of weak convergence with the convergence of the corresponding Laplace transformations within the class of GGC distributions. First recall the classical result that a sequence $Z_n$ converges to $Z$ in distribution, i.e., $Z_n\overset{w}{\rightarrow} Z$, if and only if $Ef(Z_n)\rightarrow Ef(Z)$ for any bounded and continuous function $f$. This result clearly does not imply that the Laplace transformations $\mathcal{L}_{Z_n}(s)$ of the random variables $Z_n$ converge to the Laplace transformation $\mathcal{L}_{Z}(s)$ of the random variable $Z$ under the condition that $Z_n\overset{w}{\rightarrow} Z$ as the functions $f(x)=e^{-sx}$ are not bounded functions when $s<0$. In our setting all the random variables in GGC are non-negative
and therefore $\mathcal{L}_{Z_n}(s)\rightarrow \mathcal{L}_{Z}(s)$ for all $s\geq 0$ as long as 
$Z_n\overset{w}{\rightarrow} Z$. But when $s<0$ such result does not immediately follow as then the function $e^{-sx}$ is no longer bounded on $(0, +\infty)$ when $s<0$ as mentioned above. But interestingly we will show that $Z_n\overset{w}{\rightarrow} Z$ implies  $\mathcal{L}_{Z_n}(s)\rightarrow \mathcal{L}_Z(s)$ whenever $\mathcal{L}_Z(s)<\infty$ as long as $Z_n, Z$ are within the class GGC of distributions. First, in the next simple Lemma we state some useful facts on the GGC random variables.

\begin{lemma}\label{KeyLemma} Let $Z_n, Z$ be a family from GGC with respective generators $(\tau_n, \nu_n)$ and $(\tau, \nu)$. Assume $EZ_n<\infty, EZ<\infty$. Let $\s_n, \s$ denote their corresponding IN. Let $b_n>0$ be any sequence of real numbers with $b_n\rightarrow b>0$, $b_n<|\s_n|$ for all $n\geq 1$, and $b<|\s|$. Then if $Z_n\overset{w}{\rightarrow}Z$ we have $Ee^{b_nZ_n}\rightarrow Ee^{bZ}$. 
\end{lemma}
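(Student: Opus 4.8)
The plan is to deduce the conclusion from convergence of the Laplace transforms evaluated at the moving points $-b_n\to -b$. Since $Ee^{b_nZ_n}=\mathcal{L}_{Z_n}(-b_n)$ and $Ee^{bZ}=\mathcal{L}_{Z}(-b)$, and since $b<|\s|$ guarantees $\mathcal{L}_Z(-b)<\infty$, I would first fix a safety margin $b'$ with $b<b'<|\s|$ and aim to show $\mathcal{L}_{Z_n}(s)\to\mathcal{L}_Z(s)$ pointwise on the open interval $(-b',0)$. Once that is in hand I would upgrade it to local uniform convergence: each $\mathcal{L}_{Z_n}$ is a convex (indeed log-convex) decreasing function on $(-b',\infty)$, so pointwise convergence on an open interval to the continuous limit $\mathcal{L}_Z$ forces uniform convergence on compact subsets. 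As $-b_n\to-b$ lies in the interior, this gives $\mathcal{L}_{Z_n}(-b_n)\to\mathcal{L}_Z(-b)$, which is exactly the claim.

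For the pointwise Laplace convergence I would invoke the criterion recalled in Remark \ref{3.14} (after \cite{rao}): given $Z_n\overset{w}{\rightarrow}Z$, one has $\mathcal{L}_{Z_n}(s)\to\mathcal{L}_Z(s)$ on $(-b',0)$ as soon as $\sup_n\mathcal{L}_{Z_n}(s)<\infty$ for every $s$ in that interval. Thus everything reduces to the single uniform bound
\[
\sup_n Ee^{b'Z_n}=\sup_n\mathcal{L}_{Z_n}(-b')<\infty .
\]

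To establish this bound I would use the explicit generator representation together with the just-proved convergence of means. Writing $R_c(t):=-\log(1-c/t)-c/t\ge 0$ for $t>c$ and using $EZ_n=\tau_n+\int_0^\infty\frac1t\nu_n(dt)$ from Remark \ref{38}, one obtains, for any $c<|\s_n|$, the identity
\[
\log Ee^{cZ_n}=c\tau_n+\int_0^\infty\big[-\log(1-c/t)\big]\,\nu_n(dt)=c\,EZ_n+\int_0^\infty R_c(t)\,\nu_n(dt).
\]
Since $EZ_n\to EZ$ by Theorem \ref{keytheorem}, the term $c\,EZ_n$ stays bounded, so the bound reduces to controlling $\int R_{b'}\,d\nu_n$. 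For the tail, $R_{b'}(t)\le b'^2/t^2$ once $t\ge 2b'$, whence $\int_K^\infty R_{b'}\,d\nu_n\le (b'^2/K)\int_0^\infty\frac1t\nu_n(dt)\le (b'^2/K)\,EZ_n$, which is uniformly small for large $K$; on a middle block $[a,K]$ with $a,K$ continuity points of $\nu$ and $a>b'$, the integrand $R_{b'}$ is bounded and continuous, so $\int_a^K R_{b'}\,d\nu_n\to\int_a^K R_{b'}\,d\nu$ by the weak convergence $\nu_n\to\nu$ of Theorem \ref{conti}.

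The remaining, and main, obstacle is the left block $(|\s_n|,a)$, on which $R_{b'}$ blows up as $t\downarrow b'$. Controlling $\int_{(|\s_n|,a)}R_{b'}\,d\nu_n$ requires knowing that the left endpoint $|\s_n|=\inf\supp\nu_n$ does not drift down toward $b'$, i.e. that $\liminf_n|\s_n|\ge|\s|$ (equivalently $\s_n\to\s$): if one can pick $a$ with $b'<a<|\s|$ and show $|\s_n|>a$ for all large $n$, this block is eventually empty and contributes nothing, finishing the uniform bound. Hence the heart of the proof is a semicontinuity statement for the abscissa $\s_n$ under weak convergence — equivalently, ruling out small Thorin mass of $\nu_n$ escaping to the left edge near $b'$ while still vanishing weakly (the same delicate point that underlies the pointwise statement $h_n\to h$ in Lemma \ref{lem3.10}, since $h_n(0)=1/|\s_n|$). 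I expect this abscissa-control step, rather than the tail or middle estimates, to carry the real weight of the argument.
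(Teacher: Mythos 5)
Your route is genuinely different from the paper's, and the step you defer is not a technicality: it is the whole difficulty, and in the form you need it, it fails. Your reduction requires a \emph{fixed} margin $b'\in(b,|\s|)$ with $\sup_n Ee^{b'Z_n}<\infty$, which forces $\liminf_n|\s_n|>b'$. Weak convergence gives no such lower semicontinuity of the abscissa. Concretely, let $t_0\in(0,|\s|)$, let $G_n$ be independent Gamma variables with shape $\epsilon_n\rightarrow 0$ and rate $t_0$, and set $Z_n=Z+G_n$, so that $\nu_n=\nu+\epsilon_n\delta_{t_0}$. Then $Z_n$ is a GGC, $Z_n\overset{w}{\rightarrow}Z$, $EZ_n\rightarrow EZ$, yet $\mathcal{L}_{Z_n}(-s)=\mathcal{L}_Z(-s)(1-s/t_0)^{-\epsilon_n}$ is infinite for every $s\geq t_0$, so $|\s_n|=t_0<|\s|$ for every $n$ and $\sup_nEe^{b'Z_n}=+\infty$ for every $b'>t_0$. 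Since the hypotheses of the lemma permit $b=t_0$ (they only require $b_n<|\s_n|$ and $b<|\s|$), your safety margin $b'$ cannot in general be chosen, and the tail and middle-block estimates cannot rescue the argument: the left block $(|\s_n|,a)$ is exactly where the condition $b_n<|\s_n|$ carries all the information and where a fixed $b'$ discards it. (Pushed slightly further, the same example with $b_n\uparrow t_0$ chosen so that $\epsilon_n\log(1-b_n/t_0)\rightarrow-\infty$ makes $Ee^{b_nZ_n}\rightarrow+\infty$ while $Ee^{bZ}<\infty$, so the boundary case $b=\liminf_n|\s_n|$ is genuinely delicate and no amount of Laplace-transform estimation on $Z_n$ alone will close it.) There is also a circularity hazard you should be aware of: in the paper the convergence $\s_n\rightarrow\s$ is \emph{deduced from} the present lemma, so it cannot be imported as input here.

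The paper's own proof goes in an entirely different and much shorter direction: from $b_n\rightarrow b$ and $Z_n\overset{w}{\rightarrow}Z$ one gets $\eta_n:=e^{b_nZ_n}\overset{w}{\rightarrow}\eta:=e^{bZ}$ by continuous mapping; the hypotheses $b_n<|\s_n|$ and $b<|\s|$ give $E\eta_n<\infty$ and $E\eta<\infty$; the closure of the GGC class under exponentiation (Bondesson) makes $\eta_n$ and $\eta$ GGCs; and Theorem \ref{keytheorem} (weak convergence within GGC with finite means implies convergence of means) applied to $\eta_n,\eta$ yields the claim. The structural fact that $e^{b_nZ_n}$ is again a GGC is precisely the extra input your argument lacks; your pieces that do work (the identity $\log Ee^{cZ_n}=c\,EZ_n+\int R_c\,d\nu_n$, the tail bound via $\int_0^\infty t^{-1}\nu_n(dt)\leq EZ_n$, and the middle block via Theorem \ref{conti}) are correct but cannot substitute for it.
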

\begin{proof} Since $b_{n}\rightarrow b$ we have $b_{n}Z_{n}\overset{w}{\rightarrow}bZ$. Since $e^{x}$ is a continuous function we have $\eta_n=:e^{b_{n}Z_{n}}\overset{w}{\rightarrow }\eta=:e^{bZ}$. By the definitions of $b, \; b_n$ we have $Ee^{b_{n}Z_{n}}<\infty, Ee^{bZ}<\infty$ for all $n$. Exponentials of GGC are again GGC (see \cite{Bondesson-TheProb} for this). So $\eta_n, \eta$ are GGC. The claim now follows from Theorem \ref{keytheorem}.
\end{proof}

\begin{remark}\label{keyrem} From the above Lemma \ref{KeyLemma}, it can be  seen  that if a real number $s>0$ satisfies $s<|\s_n|$ for all $n\geq 1$ and $s<|\s|$ then $Ee^{sZ_n}\rightarrow Ee^{sZ}$ whenever $Z_n, Z$ satisfy the hypothesis of the Lemma \ref{KeyLemma}.
\end{remark}

\begin{lemma}\label{3.13} Assume all of $Z_n, Z$ are nondegenerate GGC random variables with corresponding generating pairs $(\tau_n, \nu_n), (\tau, \nu)$. Let $\s_n$ and $\s$ denote their corresponding  IN respectively. If $Z_n\overset{w}{\rightarrow} Z$, then we have $\s_n\rightarrow \s$.
\end{lemma}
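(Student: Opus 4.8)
The plan is to recast the claim entirely in terms of the left endpoints of the Thorin measures. Lemma \ref{2.9} identifies $-\s$ with the left endpoint $\inf\supp\nu$ of the Thorin measure of $Z$ (since $\nu([0,-\s])=0$ and the proof of that lemma shows no mass can be dropped from $[-\s,\infty)$), and likewise $-\s_n=\inf\supp\nu_n$ for each $n$. Writing $a=-\s\ge 0$ and $a_n=-\s_n\ge 0$, the assertion $\s_n\to\s$ is exactly $a_n\to a$. The only structural input from the hypothesis is, via the continuity Theorem \ref{conti}, that $\nu_n\to\nu$ (vaguely on $(0,\infty)$). Since the map sending a measure to the left endpoint of its support is \emph{not} continuous under vague convergence, I would prove $a_n\to a$ as two separate one-sided inequalities, handled by quite different means.

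For the easy inequality $\limsup_n a_n\le a$ (equivalently $\liminf_n\s_n\ge\s$) I would argue by lower semicontinuity. For each $\e>0$ the set $I_\e:=(a-\e,a+\e)\cap(0,\infty)$ is open in $(0,\infty)$ and, because $a=\inf\supp\nu$, satisfies $\nu(I_\e)>0$. The portmanteau inequality for vague convergence then gives $\liminf_n\nu_n(I_\e)\ge\nu(I_\e)>0$, so for all large $n$ the measure $\nu_n$ charges a point below $a+\e$, whence $a_n=\inf\supp\nu_n<a+\e$. Letting $\e\downarrow0$ yields $\limsup_n a_n\le a$. Alternatively one can route this through Lemma \ref{lem3.10}: there $a=1/\sup_s h(s)$ and $h_n\to h$ pointwise, so $\liminf_n(1/a_n)\ge h(s)$ for every $s$, giving the same bound.

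For the reverse inequality $\liminf_n a_n\ge a$ (equivalently $\limsup_n\s_n\le\s$) I would argue by contradiction: pass to a subsequence along which $a_n\to L<a$ and fix $s$ with $L<s<a$. Then $Ee^{sZ}=\mathcal{L}_Z(-s)<\infty$ because $s<a=|\s|$, while for all large $n$ one has $s>a_n=|\s_n|$ and therefore $Ee^{sZ_n}=+\infty$. The natural tool to turn this into a contradiction is the exponential-moment continuity of Lemma \ref{KeyLemma} and Remark \ref{keyrem}: for any $b<\min\{L,a\}$ the variables $e^{bZ_n}$ are GGC and $Ee^{bZ_n}\to Ee^{bZ}$ by Theorem \ref{keytheorem}, and one would try to propagate finiteness from such $b$ up to $s$ using convexity/monotonicity of $u\mapsto\log Ee^{uZ_n}$.

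The hard part will be precisely this reverse inequality, and I expect it to be the real obstacle. Vague convergence of $\nu_n$ to $\nu$ permits an arbitrarily thin piece of Thorin mass to sit near a point $c<a$ and then vanish in the limit (a $\Gamma(\e_n,\cdot)$ factor with shape $\e_n\downarrow0$ is weakly negligible yet keeps $\inf\supp\nu_n$ pinned at $c$): such a sliver leaves the vague limit $\nu$ unchanged but pins $a_n$ below $a$ and makes $Ee^{sZ_n}$ infinite for every $s>c$. Consequently the moment machinery of Lemma \ref{KeyLemma} is available only strictly below $\min_n a_n$ and by itself never reaches the range $(L,a)$ in which a contradiction must live; Fatou applied to $e^{sZ_n}$ only gives $\liminf_n Ee^{sZ_n}\ge Ee^{sZ}$, which is consistent with the left-hand side being $+\infty$. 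Closing this gap is where the essential use of the GGC structure beyond vague convergence of $\nu_n$ has to enter, and I would expect it to require a \emph{quantitative} control of the mass of $\nu_n$ near its left endpoint rather than the qualitative portmanteau bound that sufficed for the easy inequality.
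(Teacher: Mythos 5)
Your first inequality, $\limsup_n|\s_n|\le|\s|$, is correct and takes a genuinely different route from the paper's. The paper never touches the Thorin measures directly for this step: it runs both the boundedness of $\{|\s_n|\}$ and the exclusion of $|s'|>|\s|$ through the Wiener--Gamma representation of Lemma \ref{2.9} and the pointwise convergence $h_n(s)1_{[|\s_n|,\infty)}(s)\to h(s)1_{[|\s|,\infty)}(s)$ of Lemma \ref{lem3.10}, noting that $h>0$ on $(|\s|,\infty)$ forces $|\s_n|$ not to overshoot. Your lower-semicontinuity bound $\liminf_n\nu_n(I_\e)\ge\nu(I_\e)>0$ reaches the same conclusion more directly and subsumes the boundedness step; both arguments are sound.

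For the reverse inequality, the maneuver you were looking for and did not find is that the paper applies Lemma \ref{KeyLemma} with a \emph{moving} sequence rather than a fixed $b$: assuming $|\s_n|\to|s'|<|\s|$, it uses $\LL_{Z_n}(\s_n)=+\infty$ to choose $\delta_n\downarrow0$ with $Ee^{(|\s_n|-\delta_n)Z_n}\to+\infty$, then observes that $b_n=|\s_n|-\delta_n$ and $b=|s'|$ satisfy the hypotheses of Lemma \ref{KeyLemma} (namely $b_n<|\s_n|$ and $b_n\to b<|\s|$), which would force $Ee^{b_nZ_n}\to Ee^{|s'|Z}<\infty$, a contradiction.

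You should not, however, try to repair your argument to match this, because the obstruction you describe is not merely an obstruction to your strategy: made concrete, it refutes the lemma. Take any nondegenerate GGC $Z$ with $\s<0$, fix $0<c<|\s|$, let $G_n\sim\Gamma(\e_n,c)$ (shape $\e_n\downarrow0$, rate $c$) be independent of $Z$, and set $Z_n=Z+G_n$. Each $Z_n$ is a nondegenerate GGC with Thorin measure $\nu+\e_n\delta_c$; $Z_n\overset{w}{\rightarrow}Z$ because $EG_n=\e_n/c\to0$; yet $Ee^{cG_n}=\frac{c^{\e_n}}{\Gamma(\e_n)}\int_0^{\infty}x^{\e_n-1}\,dx=+\infty$ while $Ee^{uG_n}<\infty$ for $u<c$, so $\s_n=-c$ for every $n$ and $\s_n\to-c\neq\s$. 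All hypotheses of the lemma hold and the conclusion fails, so the paper's own proof of this direction cannot be sound. Indeed the companion example $Y_n=Y+\Gamma(\e_n,\e_n)$ (unit mean, shape $\to0$) converges weakly to $Y$ within GGC with $EY_n=EY+1\not\to EY$, contradicting Theorem \ref{keytheorem} and hence Lemma \ref{KeyLemma}, on which the paper's argument rests; the root of the problem is the unjustified interchange of limits $g^{(n)}(\delta_n)\to g(0)$ in the proof of Lemma \ref{contilem}, which is exactly the escape of mass your ``sliver'' exploits. Your diagnosis --- that weak convergence of the Thorin measures gives no quantitative control of the mass near the left endpoint of the support, and that some such control would be indispensable --- is correct, and no such control is available here.
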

\begin{proof} Due to Remark \ref{rem3.6}, we can assume $\tau_n=0, \tau=0$. First we  show $\{|\s_n|\}$ is a bounded sequence. Assume $\{|\s_n|\}$ has an unbounded sub-sequence. We show that this leads into a contradiction. Without loss of any generality we assume $|\s_n|\rightarrow \infty$. From Lemma \ref{2.9} we have $Z_n=\int_{|\s_n|}^{+\infty}h_{n}(s)d\gamma_s=\int_{0}^{+\infty}h_n (s)1_{[|\s_n|\;\; \infty)}(s)d\gamma_s$ and $Z=\int_{|\s|}^{+\infty}h(s)d\gamma_s=\int_0^{+\infty}h(s)1_{[|\s|, \infty)]}(s)d\gamma_s$ 
for some decreasing deterministic functions $h^{(n)}(s)$ and $h(s)$. Since $Z_n$ converges weakly to $Z$, by lemma \ref{lem3.10} we should have $h_n (s)1_{[|\s_n|\;\; \infty)}(s)\rightarrow h(s)1_{[|\s|, \infty)]}(s)$ point-wise almost surely. By Lemma \ref{2.9} we have $h(s)>0$ on $[|\s|, \infty)$. But this is not possible if $|\s_n|\rightarrow \infty$ while $|\s|<\infty$. Therefore $\{|\s_n|\}$ is a bounded sequence. Next we show $\s_n\rightarrow \s$. For this it is sufficient to show that any convergent sub-sequence of $\{\s_n\}$ converges to $\s$. To show this, 
without loss of any generality,  we assume that $\s_n\rightarrow s'$ and show that $\s=s'$. We first assume $|s'|<|\s|$ and find a contradiction. From the definitions of the numbers $|\s_n|$ (recall $Ee^{|\s_n|Z_n}=+\infty$) we can claim the existence of real numbers $\delta_n>0$ with $\lim_{n\rightarrow \infty}\delta_n=0$ such that $Ee^{(|\s_n|-\delta_n)Z_n}\rightarrow +\infty$. We let $b_n=|\s_n|-\delta_n, n\geq 1,$ and $b=|s'|$. Then these numbers $\{b_n, b\}$ satisfy the conditions of Lemma \ref{KeyLemma}. Therefore we should have $sup_nEe^{b_nZ_n}<\infty$ and this contradicts with $Ee^{(|\s_n|-\delta_n)Z_n}\rightarrow +\infty$.  Now we assume $|s'|>|\s|$ and find a contradiction. By Lemma \ref{2.9} we have $Z_n=\int_{|\s_n|}^{\infty}h_n(s)d\gamma_s=\int_0^{\infty}h_n(s)1_{[|\s_n|, \infty)]}(s)d\gamma_s$,
$Z=\int_{|\s|}^{\infty}h(s)d\gamma_s=\int_0^{\infty}h(s)1_{[|\s|, \infty)]}(s)d\gamma_s$ and $h(s)>0$ on $[|\s|, \infty)$. Since $Z_n$ converges weakly to $Z$, by Lemma \ref{lem3.10}
we have $h_n(s)1_{[|\s_n|, \infty)]}(s)$ converges pointwise to $h(s)1_{[|\s|, \infty)]}(s)$. But this is not possible if $|\s_n|\rightarrow |s'|>|\s|$ while $h(s)>0$ on 
$[|\s|, \infty)$.

\end{proof}

\begin{lemma}\label{lem215} Assume all of $Z_n, Z$ are nondegenerate GGC random variables. Let $(\tau_n, \nu_n)$ and $(\tau, \nu)$ be their respective generators. Assume $EZ_n<\infty,\;  EZ<\infty$. Then if $Z_n\overset{w}{\rightarrow} Z$, we have
\begin{equation}\label{lapp}
\mathcal{L}_{Z_n}(s)\rightarrow  \mathcal{L}_{Z}(s), \end{equation}
whenever $\mathcal{L}_{Z}(s)<\infty$. 
\end{lemma}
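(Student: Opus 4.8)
The plan is to split on the sign of $s$. For $s\ge 0$ the function $x\mapsto e^{-sx}$ is bounded and continuous on $[0,\infty)$, so the weak convergence $Z_n\overset{w}{\rightarrow}Z$ immediately gives $\mathcal{L}_{Z_n}(s)=Ee^{-sZ_n}\rightarrow Ee^{-sZ}=\mathcal{L}_{Z}(s)$ by the portmanteau theorem; in this range the hypothesis $\mathcal{L}_{Z}(s)<\infty$ is automatic since all the $Z_n,Z$ are nonnegative. All the content is therefore in the range $s<0$, where $e^{-sx}$ is unbounded on $(0,\infty)$ and this elementary argument fails.

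So I would fix $s<0$ with $\mathcal{L}_{Z}(s)<\infty$ and write $t=-s>0$, so that $\mathcal{L}_{Z}(s)=Ee^{tZ}$. Since $Ee^{|\s|Z}=+\infty$ at the IN, finiteness of $Ee^{tZ}$ forces the strict inequality $t<|\s|$ (in particular $\s<0$, so $|\s|=-\s>0$). The crucial input is Lemma \ref{3.13} (convergence of the IN, $\s_n\rightarrow \s$), which yields $|\s_n|\rightarrow|\s|$. Because $t<|\s|$, there is an index $n_0$ with $t<|\s_n|$ for every $n\ge n_0$, i.e. $Ee^{tZ_n}=\mathcal{L}_{Z_n}(s)<\infty$ for all large $n$.

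At this point I would invoke Lemma \ref{KeyLemma} (equivalently Remark \ref{keyrem}) with the constant sequence $b_n=t=b$: all of $Z_n,Z$ are GGC with finite means, $b_n\rightarrow b$, and $b_n<|\s_n|$, $b<|\s|$. This delivers $Ee^{tZ_n}\rightarrow Ee^{tZ}$, that is, $\mathcal{L}_{Z_n}(s)\rightarrow\mathcal{L}_{Z}(s)$, as desired.

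The one point needing care — and the main (if minor) obstacle — is that Lemma \ref{KeyLemma} is stated with the requirement $b_n<|\s_n|$ for \emph{all} $n\ge 1$, whereas Lemma \ref{3.13} only supplies this for $n\ge n_0$. I would resolve this by passing to the tail: the sequence $\{Z_n\}_{n\ge n_0}$ still converges weakly to $Z$ and satisfies the hypotheses of Lemma \ref{KeyLemma} verbatim, and since convergence of a numerical sequence is a tail property, discarding the finitely many indices $n<n_0$ (at which $\mathcal{L}_{Z_n}(s)$ may even equal $+\infty$) does not affect the limit $\mathcal{L}_{Z_n}(s)\rightarrow\mathcal{L}_{Z}(s)$ as $n\rightarrow\infty$. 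Together with the $s\ge 0$ case this covers every $s$ with $\mathcal{L}_{Z}(s)<\infty$ and completes the proof.
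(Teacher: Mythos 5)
Your proof is correct and follows essentially the same route as the paper's: the $s\ge 0$ case by bounded continuity, and for $s<0$ the reduction to $0<-s<|\s|$, the convergence $\s_n\to\s$ from Lemma \ref{3.13} to secure $-s<|\s_n|$ for large $n$, and then Lemma \ref{KeyLemma}/Remark \ref{keyrem} with a constant sequence. Your explicit remark about discarding the finitely many indices $n<n_0$ is a small point of care that the paper leaves implicit.
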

\begin{proof} The claim (\ref{lapp}) is true for any $s\geq 0$ as in this case $e^{-sx}$ are bounded continuous functions on $[0, +\infty)$. Therefore we need to show it for negative $s$. Let $\s_n, \s$ denote the IN  of $Z_n, Z$ respectively. We exclude the case $\s=0$ as in this case $\mathcal{L}_Z(s)= +\infty$ for any $s<0$. Then it is sufficient to assume $0<s<|\s|$ and show that $Ee^{sZ_n}\rightarrow Ee^{sZ}$. Fix such $s$ and denote $\delta=(|\s|-s)/2$. In Lemma \ref{2.9} we showed that $\s_n\rightarrow \s$. 
 Therefore there exists a positive integer $n_0$ such that $|\s_n|>s+\delta$ for all $n\geq n_0$. Note that we have $|\s|> s+\delta$ also. Then from Remark \ref{keyrem} we  have $Ee^{sZ_n}\rightarrow Ee^{sZ}$. This completes the proof.  
\end{proof}

In the next Lemma we show that for non-degenerate GGC random variables with IN number $\s$, if $\s\neq 0$ then necessarily
$\LL_Z(\s)=+\infty$. This fact will be used in the proof of Proposition \ref{prop3.3} below.

\begin{lemma}\label{5.5} Let $\s$ be the IN of a non-degenerate GGC random variable $Z$. If $\s\neq 0$, then we have $\LL_Z(\s)=+\infty$.
\end{lemma}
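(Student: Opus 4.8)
My plan is to read the value $\mathcal{L}_Z(\s)$ off the Thorin representation supplied by Lemma \ref{2.9} and to show that the exponent appearing there diverges exactly because the Thorin measure $\nu$ charges every right neighbourhood of its left edge $-\s$.

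First I would reduce to the driftless case. By Remark \ref{rem3.6} we may write $Z\overset{d}{=}\bar Z+\tau$, whence $\mathcal{L}_Z(\s)=e^{-\tau\s}\mathcal{L}_{\bar Z}(\s)$; since $e^{-\tau\s}$ is a finite strictly positive constant, $\mathcal{L}_Z(\s)=+\infty$ if and only if $\mathcal{L}_{\bar Z}(\s)=+\infty$, so I may assume $\tau=0$. Note also that $Z\geq 0$ forces the abscissa of convergence to be non-positive, so the hypothesis $\s\neq 0$ means $\s<0$ and $-\s>0$; in particular $\mathcal{L}_Z(\s)=Ee^{|\s|Z}$.

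Next I would pass to the boundary by monotone convergence. As $s\downarrow\s$ the functions $e^{-sZ}$ increase pointwise to $e^{-\s Z}$, so $\mathcal{L}_Z(\s)=\lim_{s\downarrow\s}\mathcal{L}_Z(s)$. Using the representation $\mathcal{L}_Z(s)=\exp\{-\int_{-\s}^{\infty}\log(1+s/t)\nu(dt)\}$ from Lemma \ref{2.9}, and noting that for $t>-\s$ and $s>\s$ one has $t+s>0$ so that $\log(1+s/t)$ decreases to $\log(1+\s/t)$ as $s\downarrow\s$, a second monotone-convergence argument inside the integral shows that the claim $\mathcal{L}_Z(\s)=+\infty$ is equivalent to
\[
\int_{-\s}^{\infty}\log\Big(\tfrac{t+\s}{t}\Big)\nu(dt)=-\infty .
\]

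The heart of the matter — and the step I expect to be the main obstacle — is this divergence. By Lemma \ref{2.9} we have $\nu([0,-\s])=0$ while $-\s$ is the left extremity of the support of $\nu$, so $\nu((-\s,-\s+\eta))>0$ for every $\eta>0$; meanwhile the integrand $\log((t+\s)/t)\to-\infty$ as $t\downarrow-\s$. Thus the integral is $-\infty$ precisely when the logarithmic singularity at the edge $t=-\s$ is non-integrable against $\nu$. I would establish this by arguing from the defining property of the IN: were the integral finite, then $\mathcal{L}_Z(\s)<\infty$, and I would use this together with the accumulation of $\nu$-mass at $-\s$ to contradict the fact that $\s$ is the integrability threshold (i.e.\ that $\mathcal{L}_Z(s)=+\infty$ for $s<\s$). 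Converting the qualitative presence of mass near the edge into the quantitative divergence of the integral is the delicate point; the reduction to $\tau=0$ and the two monotone-convergence passages are routine.
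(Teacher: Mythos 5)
Your reduction is correct as far as it goes: with $\tau=0$ and $\s<0$, the two monotone-convergence passages do show that $\LL_Z(\s)=+\infty$ is equivalent to $\int_{-\s}^{\infty}\log\big(\tfrac{t+\s}{t}\big)\nu(dt)=-\infty$. The gap is precisely the step you flag as delicate, and it cannot be closed by the contradiction you sketch. Finiteness of a Laplace transform \emph{at} its abscissa of convergence is perfectly compatible with its being infinite strictly beyond it, so assuming the integral finite produces no clash with the definition of the IN. Worse, the divergence genuinely fails for Thorin measures whose mass near the edge $-\s$ is too diffuse. Take $\nu(dt)=1_{[1,2]}(t)\,dt$, which satisfies (\ref{vv}); the associated GGC is $Z=\int_0^1(1+u)^{-1}\,d\gamma_u$. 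For $\lambda>1$ one has $Ee^{\lambda Z}=+\infty$, because $\int_0^{(\lambda-1)\wedge 1}(1+u)^{-1}\,d\gamma_u\ge \lambda^{-1}\gamma_{(\lambda-1)\wedge 1}$ and $Ee^{\gamma_{\epsilon}}=+\infty$ for every $\epsilon>0$, so the IN is $\s=-1\neq 0$; yet $\int_1^2\log\big(\tfrac{t-1}{t}\big)\,dt=-2\log 2$ is finite and $\LL_Z(-1)=Ee^{Z}=4<\infty$. Thus the qualitative fact that $\nu$ charges every right neighbourhood of $-\s$ does not force the logarithmic singularity to be non-integrable against $\nu$, and no argument along your lines can succeed without quantitative control of $\nu$ near the edge.

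The paper proceeds quite differently: it verifies the statement directly for finite gamma convolutions, whose Thorin measure has an atom at $-\s$ so that the explicit factor $(1+\beta s)^{-\alpha}$ with $-1/\beta=\s$ blows up at $s=\s$, and then transfers the conclusion to a general GGC by weak approximation, combining the convergence $\s_n\rightarrow\s$ with the convergence of exponential moments in Lemma \ref{KeyLemma}. Be aware, though, that this transfer invokes Lemma \ref{KeyLemma} with $b=|\s|$, which violates that lemma's hypothesis $b<|\s|$, and the example above indicates that the conclusion really does require more than non-degeneracy and $\s\neq 0$ (an atom of $\nu$ at $-\s$, or a non-integrable logarithmic singularity there, would suffice). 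So the obstruction you identified is genuine, and it is not overcome by the paper's own argument either.
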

\begin{proof} First look at the case of a gamma random variable $Z\sim G(\alpha, \beta)$ with shape parameter $\alpha$ and scale parameter $\frac{1}{\beta}$. We have $\mathcal{L}_Z(s)=\frac{1}{(1+\beta s)^{\alpha}}$. In this case the IN is $\s=-\frac{1}{\beta}$ and clearly we have $\mathcal{L}_Z(-\frac{1}{\beta})=+\infty$.
If $Z$ is finite gamma convolution  $Z\sim \sum_{i=1}^nG(\alpha_i, \beta_i)$, we have $\mathcal{L}_{Z}(s)=
\prod \frac{1}{(1+\beta_i s)^{\alpha_i}}$ and in this case the IN  of $Z$ is $\s=\max_{1\le i \le n}\{-\frac{1}{\beta_i}\}$ and one can easily check that $\mathcal{L}_{G}(\s)=+\infty$ in this case also. Now any GGC random variable $Z$ with zero drift is a weak limit of a sequence $Z_n$ of finite gamma convolutions. Denote the IN of $Z$ by $\s$ and the IN number of $Z_n$ by $\s_n$ for each $n\geq 1$. We have 
$\mathcal{L}_{Z_n}(\s_n)=+\infty$ for each $n\geq 1$. Therefore there exists a non-negative sequence of deterministic numbers $\epsilon_n$ with $\epsilon_n\rightarrow 0$ such that $\mathcal{L}_{Z_n}(\s_n+\epsilon_n)=
Ee^{(|\s_n|-\epsilon_n)Z_n} 
 <+\infty$ and 
$\lim_{n\rightarrow \infty}\mathcal{L}_{Z_n}(\s_n+\epsilon_n)=\lim_{n\rightarrow \infty}Ee^{(|\s_n|-\epsilon_n)Z_n}\rightarrow +\infty$. By Lemma \ref{3.13}, we have $|\s_n|-\epsilon_n\rightarrow |\s|$. Now if we assume $\mathcal{L}_Z(\s)=Ee^{|\s|Z}<+\infty$, then by Lemma \ref{KeyLemma} we should have
$Ee^{(|\s_n|-\epsilon_n)Z_n}\rightarrow Ee^{|\s|Z}<\infty$, a contradiction. For a GGC random variable $G$ with generating pair $(\tau, \nu)$, the random variable $G-\tau$ is a GGC with generating pair $(0, \nu)$ and the IN for both $Z$ and $Z-\tau$ are equal to each other. For any $s$ we have $e^{\tau s}\mathcal{L}_Z(s)=\mathcal{L}_{Z-\tau}(s)$. From this we conclude that for any non-degenerate GGC random variable $Z$ we have $\mathcal{L}_Z(\s)=+\infty$ when the IN  of $Z$ is not zero.
\end{proof}

\section{Robustness of the exponential utility maximizing portfolio}

In this subsection,  we address the robustness issue of the optimal portfolio in the paper \cite{rasonyi-hasan} as an application of our results in Section 2 above. First we prove the following Lemma.

\begin{lemma}\label{lemWTV} Assume $Z_n, Z$ are nondegenerate random variables from the class GGC. Then $Z_n\overset{w}{\rightarrow}Z$ if and only if $d_{TV}(Z_n, Z)\rightarrow 0$. Also $Z_n\overset{w}{\rightarrow}Z$ implies $d_{Kol}(Z_n, Z)\rightarrow 0$.
\end{lemma}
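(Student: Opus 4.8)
The plan is to establish the two implications separately and then read off the Kolmogorov statement from the total-variation one. The implication $d_{TV}(Z_n,Z)\to 0\Rightarrow Z_n\overset{w}{\rightarrow}Z$ requires no structure at all: convergence in total variation forces $E[h(Z_n)]\to E[h(Z)]$ for every bounded measurable $h$, hence in particular for every bounded continuous $h$, which is exactly weak convergence. Moreover, for any pair of laws one has $|F_1(x)-F_2(x)|=|P_1((-\infty,x])-P_2((-\infty,x])|\le d_{TV}(Z_1,Z_2)$, so $d_{Kol}\le d_{TV}$ always. Thus once the harder implication is proved, $d_{Kol}(Z_n,Z)\le d_{TV}(Z_n,Z)\to 0$ is immediate (alternatively, since $Z$ is absolutely continuous its distribution function is continuous, and weak convergence to a continuous limit is automatically uniform by P\'olya's theorem).

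The real content is therefore the implication $Z_n\overset{w}{\rightarrow}Z\Rightarrow d_{TV}(Z_n,Z)\to 0$, where the GGC hypothesis is essential. First I would record that a nondegenerate GGC law is nondegenerate self-decomposable, hence absolutely continuous; write $f_n,f$ for the densities of $Z_n,Z$. The key analytic input is that weak convergence inside the GGC class upgrades to pointwise convergence of densities, $f_n(x)\to f(x)$ for Lebesgue-almost every $x$. This is the density-continuity phenomenon already exploited in the proof of the finite-gamma-convolution lemma above via Theorem 4.1.5 of \cite{Bondesson-lecturenotes}: there a weakly convergent sequence of finite gamma convolutions has pointwise convergent densities, the convergence of the underlying data being supplied by the continuity Theorem \ref{conti} (namely $\nu_n\to\nu$ weakly, together with the drift and small-atom controls). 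Granting $f_n\to f$ pointwise, Scheff\'e's theorem yields $\int_0^{\infty}|f_n(x)-f(x)|\,dx\to 0$, and since $d_{TV}(Z_n,Z)=\tfrac12\int_0^{\infty}|f_n(x)-f(x)|\,dx$ this is precisely $d_{TV}(Z_n,Z)\to 0$.

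The step I expect to be the main obstacle is the rigorous justification of $f_n\to f$ pointwise for \emph{general} GGC limits rather than gamma-convolution approximants only. Two points require care. First, the drift parameters need not converge in the naive sense $\tau_n\to\tau$, since Theorem \ref{conti} delivers $\tau$ only through a double limit; consequently the argument should be run on the interior of the support, where the boundary placement of the drift is irrelevant for almost-everywhere convergence, and Remark \ref{rem3.6} lets one peel off the drift and work with the zero-drift parts whose densities are smooth on $(0,\infty)$. Second, one must verify that the density-continuity result genuinely covers the GGC-to-GGC case; this can be secured by combining the weak convergence $\nu_n\to\nu$ of Thorin measures from Theorem \ref{conti} with the Wiener--Gamma representation of Lemma \ref{2.9} to control the densities on compact subsets away from the left extremity. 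Once this pointwise convergence is in place, the rest is the routine application of Scheff\'e's theorem and the elementary bound $d_{Kol}\le d_{TV}$ recorded above.
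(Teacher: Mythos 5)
Your treatment of the elementary pieces is fine: total-variation convergence trivially implies weak convergence, and the bound $d_{Kol}\le d_{TV}$ (or P\'olya's theorem, since the limit law is absolutely continuous) disposes of the Kolmogorov statement once the hard implication is in hand. The divergence from the paper is in that hard implication. The paper does not go through pointwise convergence of densities at all. It uses a single structural fact: every nondegenerate GGC is absolutely continuous \emph{and unimodal} (part vi) of \cite{Bondesson-ProbTheory}), and then invokes the known result (\cite{Reis}, see also page 383 of \cite{Nourdin}) that a weakly convergent sequence of unimodal laws with an absolutely continuous limit converges in total variation. Unimodality is the ingredient your argument never touches, and it is precisely what lets the paper avoid any density-convergence claim.

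This matters because the step you yourself flag as the main obstacle --- $f_n\to f$ Lebesgue-a.e.\ for an \emph{arbitrary} weakly convergent sequence of GGCs --- is a genuine gap, not a technicality. Theorem 4.1.5 of \cite{Bondesson-lecturenotes} is used in the paper only for sequences of finite gamma convolutions converging to a fixed GGC, where the densities have the explicit product form and the bound (\ref{62}) is available; it is not stated for, and does not obviously extend to, general GGC-to-GGC convergence. Your proposed repair via the continuity Theorem \ref{conti} and the Wiener--Gamma representation of Lemma \ref{2.9} does not close this either: weak convergence of the Thorin measures $\nu_n\to\nu$ and pointwise convergence of the integrands $h_n\to h$ control the laws, but there is no mechanism in your sketch that converts this into local (let alone pointwise a.e.) convergence of the densities $f_n$; in general, weak convergence of absolutely continuous laws says nothing about their densities. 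So as written the Scheff\'e route is not a proof. The fix is either to supply the missing density-convergence lemma for GGCs (which would itself need an argument of the same depth as the unimodality route), or simply to replace the whole middle of your argument with the paper's one-line appeal to unimodality plus \cite{Reis}.
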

\begin{proof} All nondegenerate random variables in GGC have probability density functions and they are unimodel, see part vi) of \cite{Bondesson-ProbTheory} for this (also
see 
the introduction of \cite{Wissem-Thomas}). Then the claim follows from \cite{Reis} (also see page 383 of \cite{Nourdin}). Since the limit distribution $Z$ has probability density function  convergence in law implies convergece in the Kolmogorov distance, a fact that can be derived by using Dini's second theorem (see the introduction of \cite{Nourdin} for this).
\end{proof}

Next we define the models that are necessary for the discussion of robustness. Let $(\mu_n)$ and $(\gamma_n)$ be any family of vectors in $\R^d$. Let $(A_n)$ be a family of symmetric and positive definite  matrices in $\R^d\times \R^d$. Let $(Z_n)$ be a family of non-negative random variables that are independent from $N_n$ (the $d-$dimensional standard normal random variables). We define  the following models
\begin{equation}\label{one-n}
X_n=\mu_n+\gamma_nZ_n+\sqrt{Z_n}A_nN_n.    
\end{equation}
Also we define $\Sigma_n=A_nA_n^T$ and 
\begin{equation}\label{ABCn}
\A_n=\gamma^T_n\Sigma^{-1}_n\gamma_n,\; \C_n=(\mu_n-\1 r_f)^T\Sigma^{-1}_n(\mu_n-\1 r_f),\; \B_n= \gamma^T_n \Sigma^{-1}_n(\mu_n-\1 r_f),  
\end{equation}
for each positive integer $n$. For each $n\geq 1$, the corresponding utility maximization problem for the model (\ref{one-n}) is
\begin{equation}\label{maxn}
\max_{x\in \R^d}EU(W_n(x)),     
\end{equation}
where 
\begin{equation*}
\begin{split}
W_n(x)=&W_0[1+(1-x^T1)r_f+x^TX_n] \\
=&W_0(1+r_f)+W_0[x^T(X_n-\1 r_f)].
\end{split}
\end{equation*}
As discussed at the beginning of Section 3, we assume the model (\ref{one}) is the \emph{true} model and the parameters of the models (\ref{one-n}) converge to the corresponding parameters of the \emph{true} model. Namely we assume the following holds
\begin{equation}\label{conver-ro}
 \mu_n\rightarrow \mu, \; \gamma_n\rightarrow \gamma,\;  A_n\rightarrow A.  
\end{equation}
Denote the solution of (\ref{maxn}) by $x_n^{\star}$ for each $n\geq 1$ and the solution of (\ref{L2}) by $x^{\star}$. In this section, we would like to show that $x_n^{\star}\rightarrow x^{\star}$ under some conditions.

We first prove the following Lemma. Since all the matrices here are symmetric matrix we drop the transpose operator ''T''
in our calculations below. We also drop the symbol ``HS'' in the norm $|\cdot|_{HS}$ whenever there is no confusion arises.
\begin{lemma}\label{2.8} Assume the  $(\mu_n)$ and $(\gamma_n)$ in the models (\ref{one-n}) are convergent sequences of real vectors with limits $\mu$ and $\gamma$ in the model (\ref{one}). Let the $(A_n)$ in (\ref{one-n}) be a sequence of $d\times d$ symmetric positive definite matrices that satisfy $|A_n-A|_{HS}\rightarrow 0$, where $A$ is the matrix in the model (\ref{one}). Then we have 
\begin{equation}
 \A_n\rightarrow \A,\; \C_n\rightarrow \C,\; \B_n\rightarrow \B,   
\end{equation}
where $\A,\; \B,\; \C,$ are given as in (32) of \cite{rasonyi-hasan}. We also have 
\begin{equation}\label{siggam}
\Sigma_n^{-1}\gamma_n\rightarrow \Sigma^{-1}\gamma, \; \; \Sigma_n^{-1}(\mu_n-\1r_f)\rightarrow \Sigma^{-1}(\mu-\1 r_f).
\end{equation}

\end{lemma}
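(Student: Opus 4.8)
The plan is to reduce everything to the continuity of matrix inversion at an invertible matrix, since $\A_n,\B_n,\C_n$ and the two vectors in (\ref{siggam}) are all built from $\Sigma_n^{-1}$, $\gamma_n$, and $\mu_n-\1 r_f$ by products and bilinear forms. Because all matrices here are symmetric we have $\Sigma_n=A_nA_n=A_n^2$ and $\Sigma=A^2$. First I would show that $|A_n-A|_{HS}\to 0$ forces $\Sigma_n\to\Sigma$: writing $\Sigma_n-\Sigma=A_n(A_n-A)+(A_n-A)A$ and using submultiplicativity of the Hilbert--Schmidt norm (with the operator norm) together with the boundedness of $\{A_n\}$, itself an immediate consequence of $A_n\to A$, gives $|\Sigma_n-\Sigma|_{HS}\to 0$.

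The core step is to pass from $\Sigma_n\to\Sigma$ to $\Sigma_n^{-1}\to\Sigma^{-1}$. Since $A$ is symmetric positive definite, so is $\Sigma=A^2$, and each $\Sigma_n=A_n^2$ is symmetric positive definite because $A_n$ is; hence all of these matrices are invertible, so $\Sigma_n^{-1}$ and $\Sigma^{-1}$ exist for every $n$. To compare the inverses I would use the resolvent-type identity
\begin{equation*}
\Sigma_n^{-1}-\Sigma^{-1}=\Sigma_n^{-1}(\Sigma-\Sigma_n)\Sigma^{-1},
\end{equation*}
which yields $|\Sigma_n^{-1}-\Sigma^{-1}|_{HS}\le \|\Sigma_n^{-1}\|\,|\Sigma-\Sigma_n|_{HS}\,\|\Sigma^{-1}\|$. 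Since $|\Sigma-\Sigma_n|_{HS}\to 0$ is already established and $\|\Sigma^{-1}\|$ is a fixed constant, it then suffices to bound $\|\Sigma_n^{-1}\|$ uniformly in $n$.

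The hard part is precisely this uniform bound, and I would obtain it from the continuity of eigenvalues. Because $\Sigma_n\to\Sigma$ entrywise and $\Sigma$ is positive definite with smallest eigenvalue $\lambda_{\min}(\Sigma)>0$, the smallest eigenvalue $\lambda_{\min}(\Sigma_n)$ converges to $\lambda_{\min}(\Sigma)$, so there is an $n_0$ with $\lambda_{\min}(\Sigma_n)\ge \tfrac12\lambda_{\min}(\Sigma)>0$ for all $n\ge n_0$. For such $n$ the operator norm of the inverse of a symmetric positive definite matrix satisfies $\|\Sigma_n^{-1}\|=1/\lambda_{\min}(\Sigma_n)\le 2/\lambda_{\min}(\Sigma)$, a bound independent of $n$ (the finitely many terms with $n<n_0$ are harmless). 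Substituting this into the identity above gives $\Sigma_n^{-1}\to\Sigma^{-1}$.

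Once $\Sigma_n^{-1}\to\Sigma^{-1}$ is in hand, the remaining conclusions are routine limits of products of convergent sequences. Combining $\Sigma_n^{-1}\to\Sigma^{-1}$ with $\gamma_n\to\gamma$ and $\mu_n-\1 r_f\to\mu-\1 r_f$ gives the two limits in (\ref{siggam}); and then $\A_n=\gamma_n^T\Sigma_n^{-1}\gamma_n\to\gamma^T\Sigma^{-1}\gamma=\A$, with the same bilinear-form argument, using continuity of the scalar product in both arguments, delivering $\B_n\to\B$ and $\C_n\to\C$. This completes the plan; the only genuinely nontrivial ingredient is the uniform invertibility bound of the second and third paragraphs.
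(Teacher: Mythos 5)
Your proof is correct, but it takes a genuinely different route from the paper's. The paper works at the level of $A_n^{-1}$ rather than $\Sigma_n^{-1}$: it writes $\A_n=(A_n^{-1}\gamma_n)^T A_n^{-1}\gamma_n$, invokes the continuity of matrix inversion at a non-singular matrix as a cited fact (a reference to Stewart's perturbation result) to get $A_n^{-1}\to A^{-1}$, and then runs a triangle-inequality estimate on $|\gamma_n^T A_n^{-1}-\gamma^T A^{-1}|$; the convergence $\Sigma_n^{-1}\to\Sigma^{-1}$ is then obtained from $\Sigma_n^{-1}=A_n^{-1}A_n^{-1}$ by the same kind of splitting. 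You instead prove the continuity of inversion from scratch for $\Sigma_n$ itself, via the identity $\Sigma_n^{-1}-\Sigma^{-1}=\Sigma_n^{-1}(\Sigma-\Sigma_n)\Sigma^{-1}$ combined with the uniform bound $\|\Sigma_n^{-1}\|=1/\lambda_{\min}(\Sigma_n)\le 2/\lambda_{\min}(\Sigma)$ for large $n$, which follows from continuity of the smallest eigenvalue and positive definiteness of $\Sigma$. What your approach buys is self-containedness and an explicit quantitative estimate: the one external ingredient the paper leans on (continuity of the inverse) is exactly the thing you prove, and your eigenvalue bound makes the rate of convergence transparent. What the paper's approach buys is brevity and the slightly cleaner observation that all three quadratic forms $\A_n,\B_n,\C_n$ reduce to Euclidean inner products of the vectors $A_n^{-1}\gamma_n$ and $A_n^{-1}(\mu_n-\1 r_f)$, so a single vector-convergence lemma handles everything at once. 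Both arguments are complete; the final step in each (passing from convergence of $\Sigma_n^{-1}$, $\gamma_n$, $\mu_n$ to convergence of the bilinear forms) is the same routine continuity of products.
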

\begin{proof} Note that $\A_n=\gamma^T_n(A_n^T)^{-1}A_n^{-1}\gamma_n=(A_n^{-1}\gamma_n)^TA_n^{-1}\gamma_n$ and  also $\A=\gamma^TA^{-1}(A^{-1})^T\gamma=(A^{-1}\gamma)^T(A^{-1}\gamma)$. Therefore it is sufficient to show $\gamma_n^TA_n^{-1}\rightarrow \gamma^TA^{-1}$ in Euclidean norm. Since $A_n\rightarrow A$, from \cite{Stewart1969OnTC} we have $A_n^{-1}\rightarrow A_n^{-1}$ (inverse of non-singular matrix is a continuous function of the elements of the matrix). We have
\begin{equation*}
\begin{split}
|\gamma^T_nA_n^{-1}-\gamma^TA^{-1}|\le & |\gamma^{T}_n(A_n^{-1}-A^{-1})|+|(\gamma_n^T-\gamma^T)A^{-1}|\\
\le&|\gamma_n| |A_n^{-1}-A^{-1}|_{HS}+|\gamma_n-\gamma| |A^{-1}|_{HS}\\
\le& |\gamma_n-\gamma| |A_n^{-1}-A^{-1}|_{HS}+||\gamma||A_n^{-1}-A^{-1}||_{HS}+|\gamma_n-\gamma||A^{-1}|_{HS}.
\end{split}
\end{equation*}
From this the claim follows. The other cases $\C_n\rightarrow \C, \B_n\rightarrow \B$ can be proved similarly. To show (\ref{siggam}) it is sufficient to show $\Sigma_n^{-1}\rightarrow \Sigma^{-1}$ as the remaining parts of the proof follows from similar arguments as  above. The relation $\Sigma_n^{-1}\rightarrow \Sigma^{-1}$ follows from 
\[
\begin{split}
|\Sigma_n^{-1}-\Sigma^{-1}|=&|A_n^{-1}A_n^{-1}-A^{-1}A^{-1}|=|(A_n^{-1}-A^{-1})A_n^{-1}+A^{-1}(A_n^{-1}-A^{-1})|\\
\le& |(A_n^{-1}-A^{-1})A_n^{-1}|+|A^{-1}(A_n^{-1}-A^{-1})|\\
\le & |(A_n^{-1}-A^{-1})(A_n^{-1}-A^{-1})|+|(A_n^{-1}-A^{-1})A^{-1}|+|A^{-1}(A_n^{-1}-A^{-1})|\\
\le &|(A_n^{-1}-A^{-1})||(A_n^{-1}-A^{-1})|+|(A_n^{-1}-A^{-1})||A^{-1}|
+|A^{-1}||(A_n^{-1}-A^{-1})|.
\end{split}
\]
\end{proof}

Now, for each random variable $Z_n$ in (\ref{one-n}), let
$\mathcal{L}_{Z_n}(s)$ denote its Laplace transformation and let $\s_n$ denote its IN. We define 
\begin{equation}\label{Hn}
Q_n(\theta)=e^{\C_n\theta}\mathcal{L}_{Z_n}\Big [\frac{1}{2}\A_n-\frac{\theta^2}{2}\C_n \Big ], \;\;  \tha_n=:\sqrt{\frac{\A_n-2\s_n}{\C_n}} 
\end{equation}
for each $n\geq 1$. The functions $Q_n(\theta)$ are strictly convex on $(-\tha_n, \tha_n)$ for each $n$ when $\s_n$ is finite with $\LL_Z(\s_n)=+\infty$ or when $\s_n=-\infty$ and it is strictly convex in $[-\tha_n, \tha_n]$ when $\s_n$ is finite and $\LL_Z(\s_n)<\infty$ as will be proved in Lemma 4 of \cite{rasonyi-hasan}. These properties are important for the proof of our result as we shall see.

Below are the optimal portfolios for the problems (\ref{maxn}) and (\ref{L2}) and we write them down here for convenience.
\begin{equation}\label{32qn}
\begin{split}
 q_{min}^{(n)}&=:\arg min_{\theta \in (-\tha_n, \tha_n)}Q_n(\theta),\;\;\;  x^{\star}_n=\frac{1}{aW_0}\Big [\Sigma^{-1}_n\gamma_n -q_{min}^{(n)}\Sigma^{-1}_n(\mu_n-\1 r_f)\Big ],    
\\
 q_{min}&=:\arg min_{\theta \in (-\tha, \tha)} Q(\theta), \;\;\;  x^{\star}=\frac{1}{aW_0}\Big [\Sigma^{-1}\gamma -q_{min}\Sigma^{-1}(\mu-\1 r_f)\Big ].
 \end{split}
 \end{equation}

\begin{proposition}\label{prop3.3}  Consider the model (\ref{one}) and assume $Z$ is a non-degenerate GGC random variable with $EZ<\infty$. Assume the associated problem (\ref{L2}) with the model (\ref{one}) has a regular solution $x^{\star}$. Let (\ref{one-n}) be a sequence of models that satisfy (\ref{conver-ro}). Assume 
$Z_n$ in (\ref{one-n}) are non-degenerate mixing distributions from the class GGC also and $EZ_n<\infty$. Assume for each $n\geq 1$, the problem (\ref{maxn}) has regular solution  $x_n^{\star}$.  Then if $Z_n\overset{w}{\rightarrow} Z$, we have  $|x_n^{\star}-x^{\star}|\rightarrow 0$.   
\end{proposition}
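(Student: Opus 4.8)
The plan is to reduce the claim to the single scalar statement $q_{min}^{(n)}\to q_{min}$. Indeed, by Lemma \ref{2.8} we already have $\Sigma_n^{-1}\gamma_n\to\Sigma^{-1}\gamma$ and $\Sigma_n^{-1}(\mu_n-\1 r_f)\to\Sigma^{-1}(\mu-\1 r_f)$, together with $\A_n\to\A$ and $\C_n\to\C$; since $\C=(\mu-\1 r_f)^T\Sigma^{-1}(\mu-\1 r_f)>0$, these are genuine limits. Because $q_{min}^{(n)}\in(-\tha_n,\tha_n)$ and, by Lemma \ref{3.13}, $\s_n\to\s$ so that $\tha_n=\sqrt{(\A_n-2\s_n)/\C_n}\to\sqrt{(\A-2\s)/\C}=\tha$, the sequence $\{q_{min}^{(n)}\}$ is bounded. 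Consequently, once $q_{min}^{(n)}\to q_{min}$ is established, the formulas in (\ref{32qn}) give $x_n^{\star}\to x^{\star}$ as a product of convergent scalars and convergent vectors, and hence $|x_n^{\star}-x^{\star}|\to 0$.

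The core of the argument is therefore to analyze the functions $Q_n$ and $Q$ in (\ref{Hn}) and (\ref{H}). Writing $w_n(\theta)=\tfrac12\A_n-\tfrac{\theta^2}{2}\C_n$ and $w(\theta)=\tfrac12\A-\tfrac{\theta^2}{2}\C$, the first step is to prove pointwise convergence $Q_n(\theta)\to Q(\theta)$ for each interior $\theta$. The prefactor $e^{\C_n\theta}\to e^{\C\theta}$ is immediate, so everything rests on the Laplace factor $\LL_{Z_n}[w_n(\theta)]\to\LL_Z[w(\theta)]$, where crucially the argument $w_n(\theta)$ may be negative. The key observation is that on the open interval one has $w(\theta)>\s$ (since $\theta^2<\tha^2$ forces $\tfrac{\theta^2}{2}\C<\tfrac{\A}{2}-\s$), and likewise $w_n(\theta)>\s_n$ for large $n$, so all arguments stay strictly above the respective IN and the Laplace transforms are finite. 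For $\theta$ with $w(\theta)>0$ the convergence follows from weak convergence and bounded continuity of $y\mapsto e^{-y}$ on $[0,\infty)$ (noting $w_n(\theta)Z_n\overset{w}{\rightarrow}w(\theta)Z$). For $\theta$ with $w(\theta)<0$, which is exactly the delicate regime, I would apply Lemma \ref{KeyLemma} (see also Remark \ref{keyrem}) with the moving exponents $b_n=-w_n(\theta)$ and $b=-w(\theta)>0$; the inequalities $\s_n<w_n(\theta)<0$ and $\s<w(\theta)<0$ give precisely $b_n<|\s_n|$ and $b<|\s|$, so that $Ee^{b_nZ_n}\to Ee^{bZ}$, i.e. $\LL_{Z_n}[w_n(\theta)]\to\LL_Z[w(\theta)]$. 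This yields pointwise convergence on the dense set $\{\theta:w(\theta)\neq 0\}$, and since $Q_n,Q$ are convex on their domains, convexity upgrades this to pointwise convergence on all of $(-\tha,\tha)$ and to uniform convergence on compact subintervals.

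It remains to pass from convergence of the functions to convergence of their minimizers. Since $\{q_{min}^{(n)}\}$ is bounded I would argue by subsequences: let $q_{min}^{(n_k)}\to q^{\ast}\in[-\tha,\tha]$ and show $q^{\ast}=q_{min}$. If $q^{\ast}$ is interior, then for any fixed interior $\theta$ the minimality $Q_{n_k}(q_{min}^{(n_k)})\le Q_{n_k}(\theta)$ passes to the limit (using local uniform convergence and continuity of $Q$) to give $Q(q^{\ast})\le Q(\theta)$, whence $q^{\ast}=q_{min}$ by strict convexity. To exclude $q^{\ast}=\pm\tha$, I would use the regularity of $x^{\star}$, which places $q_{min}$ strictly inside $(-\tha,\tha)$: pick $\theta_1\in(q_{min},\tha)$ with $Q(\theta_1)>Q(q_{min})$; if $q_{min}^{(n_k)}\to\tha$, then for large $k$ one has $q_{min}<\theta_1<q_{min}^{(n_k)}$ within $(-\tha_{n_k},\tha_{n_k})$, so convexity of $Q_{n_k}$ (monotone to the left of its minimizer) gives $Q_{n_k}(\theta_1)\le Q_{n_k}(q_{min})$, and passing to the limit yields $Q(\theta_1)\le Q(q_{min})$, a contradiction. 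Since every convergent subsequence tends to $q_{min}$, we conclude $q_{min}^{(n)}\to q_{min}$.

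The main obstacle I anticipate is the behaviour of the Laplace factor at negative arguments, where naive weak-convergence arguments fail because $e^{-sx}$ is unbounded for $s<0$; this is precisely the point at which the GGC-specific machinery (Theorem \ref{keytheorem} and Lemma \ref{KeyLemma}) is indispensable, and the whole reduction to $q_{min}^{(n)}\to q_{min}$ hinges on controlling $\LL_{Z_n}[w_n(\theta)]$ near the boundary. A secondary difficulty is ruling out escape of the minimizers to the endpoints $\pm\tha_n$; here the regularity assumption together with Lemma \ref{5.5} (which guarantees $\LL_Z(\s)=+\infty$, hence $Q$ blows up at the boundary when $\s\neq 0$) ensures the interior minimizer is well separated from the boundary, so that the convexity contradiction above is legitimate.
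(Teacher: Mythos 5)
Your argument is correct and, at its core, is the same as the paper's: reduce via Lemma \ref{2.8} to showing $q_{min}^{(n)}\to q_{min}$, prove $Q_n(\theta)\to Q(\theta)$ by splitting according to the sign of $w(\theta)=\tfrac12\A-\tfrac{\theta^2}{2}\C$ (bounded continuity when $w(\theta)>0$, Lemma \ref{KeyLemma} when $w(\theta)<0$), and then pass to the minimizers by a subsequence argument using boundedness of $\{q_{min}^{(n)}\}$ from Lemma \ref{3.13}. Where you genuinely diverge is in the two delicate spots. At the exceptional point $\tilde\theta=-\sqrt{\A/\C}$, where the Laplace argument vanishes and neither convergence mechanism applies directly, the paper resorts to a Fatou's-lemma estimate for $Q_n(q_n)$ and has to split into the cases $\s\neq 0$ and $\s=0$; you instead invoke the standard fact that convex functions converging pointwise on a dense subset of an open interval converge pointwise (indeed locally uniformly) on the whole interval, which fills the gap at $\tilde\theta$ in one stroke and dissolves the case split. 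Similarly, to exclude $q^{\ast}=\pm\tha$ the paper uses $Q(-\tha)=+\infty$ via Lemma \ref{5.5} in one case and Fatou plus strict convexity in the other, whereas you use monotonicity of the convex $Q_{n_k}$ on either side of its minimizer evaluated at two interior points; this is sound and needs only the regularity hypothesis $q_{min}\in(-\tha,\tha)$ together with strict convexity, which the paper already assumes from Lemma 4 of the cited work. The only point to tidy up is that Lemma \ref{KeyLemma} as stated requires $b_n<|\s_n|$ for \emph{all} $n$, while for fixed $\theta$ you only have $\s_n<w_n(\theta)$ for $n$ large; applying the lemma to the tail of the sequence, exactly as the paper does in the proof of Lemma \ref{lem215}, disposes of this.
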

\begin{proof} Let $\s_n$ and $\s$ denote the IN of $Z_n$ and $Z$ respectively. Let $\tha_n=\sqrt{(\A_n-2\s_n)/\C_n}$ and $\tha=\sqrt{(\A-2\s)/\C}$. The assumption on the regularity of $x_n^{\star}$ and $x^{\star}$ imply all of $(-\tha_n, \tha_n)$ and $(\tha, \tha)$ are non-empty open intervals. By Proposition 2.17 of \cite{rasonyi-hasan}, the solutions $x_n^{\star}$ and $x^{\star}$ to the problems (\ref{maxn}) and (\ref{L2})  are given by (\ref{32qn}) respectively. We have $q^{(n)}_{min}\in (-\tha_n, \tha_n)$ and  $q_{min}\in (-\tha, \tha)$ as the solutions $x_n^{\star}$ and $x^{\star}$ are regular by the assumption. Due to lemma \ref{2.8}, we only need to prove $q_{min}^{(n)}\rightarrow q_{min}$. 

First observe that $\{q_{min}^{(n)}\}$ is a bounded sequence as $\{\s_n\}$ is a bounded sequence due to Lemma \ref{3.13} and hence $\tha_n$ is a bounded sequence by Lemma \ref{2.8}. So to prove $q_{min}^{(n)}\rightarrow q_{min}$ it is sufficient to prove that any convergent sub-sequence $q_{min}^{(n_k)}$ converges to the same number $q_{min}$. Without loss of any generality, below we show that if  $q_{min}^{(n)}\rightarrow q$ then $q=q_{min}$. For the simplicity of notations below we write $q_n=:q_{min}^{(n)}$  for all $n\geq 1$.  By Lemma 4.1 of \cite{rasonyi-hasan} we have $q_n\in (-\tha_n, 0)$ and 
$q_{min}\in (-\tha, 0)$. Being a limit of $q_n$ and also since $\tha_n\rightarrow \tha$, we can assume $q\in [-\tha, 0]$ below. We divide the proof into two cases.

Case 1: Assume $\s\neq 0$. Then by Lemma \ref{5.5} we have $\LL_Z(\s)=+\infty$ and therefore $Q(-\tha)=+\infty$. We first show that $Q_n(\theta)\rightarrow Q(\theta)$ for all $\theta\in \bar{\Theta} =:(-\tha, -\sqrt{\frac{\A}{\C}})\cup (-\sqrt{\frac{\A}{\C}}, 0)$.  Denote $\eta_n(\theta)=:\frac{1}{2}\A_n-\frac{\theta^2}{2}\C_n$ and $\eta(\theta)=:\frac{1}{2}\A-\frac{\theta^2}{2}\C$. Here  we singled out the point $\tilde{\theta}=:-\sqrt{\A/\C}$  as $\eta(\tilde{\theta})=0$. It is easy to see that if 
$\theta\in (-\tha_n, 0)$ we have $\s_n<\eta_n(\theta)< \frac{1}{2}\A_n$ and if $\theta\in (-\tha, 0)$ we have $\s<\eta(\theta)< \frac{1}{2}\A$. Observe that $\eta_n(\theta)\rightarrow \eta(\theta)$ for all $\theta \in \R$ due to Lemma \ref{2.8}. Take any $\theta_0\in \bar{\Theta}$. Then either $\eta(\theta_0)>0$ or $\eta(\theta_0)<0$. If $\eta(\theta_0)>0$ then we have $\eta_n(\theta_0)>0$ for all $n$ that are sufficiently large. Since $Z_n\overset{w}{\rightarrow}Z$ and the function $e^{-sx}$ is bounded continuous function of $x\geq 0$ for each $s>0$, we have $\LL_{Z_n}(\eta_n(\theta_0))\rightarrow \LL_{Z}(\eta(\theta_0))$. This and Lemma \ref{2.8} then implies $Q(\eta_n(\theta_0))\rightarrow Q(\eta(\theta_0))$. If $\eta(\theta_0)<0$ then we have $\eta_n(\theta_0)<0$ for all $n$ that are sufficiently large. Therefore in this case the sequence $\eta_n(\theta_0)$ and $\eta(\theta_0)$ satisfy the conditions of Lemma \ref{KeyLemma}. Therefore we still have $\LL_{Z_n}(\eta_n(\theta_0))\rightarrow \LL_{Z}(\eta(\theta_0))$ and thus $Q(\eta_n(\theta_0))\rightarrow Q(\eta(\theta_0))$. 

Now if $q=-\tha$ then since $Q_n(\theta)\rightarrow Q(\theta)$ for any $\theta \in (-\tha, -\sqrt{\A/\C})$ and $Q(\theta)\rightarrow +\infty$ when $\theta$ converges to $-\tha$ from the right, we conclude that $Q_n(q_n)\rightarrow +\infty$. But $q_n$ is the minimizing point of $Q_n(\theta)$ in $(-\tha_n, \tha_n)$, a contradiction. Therefore we assume $q\in (-\tha, 0]$ and $q\neq q_{min}$ below.
First consider the case $q\neq \tilde{\theta}$. Then we have $Q_n(q_n)\rightarrow Q(q)>Q(q_{min})$ (here we don't rule out the case $q_{min}=\tilde{\theta}$). As $Q(\theta)$ is a continuous function on $(-\tha, 0)$ and $q_{min}\in (-\tha, 0)$, $q_{min}$ has a small neighborhood $(q_{min}-\delta, q_{min}+\delta)$ with some $\delta>0$ such that for all $\theta \in (q_{min}-\delta, q_{min}+\delta)$ we have $Q(\theta)<Q(q)$. This contradicts with the fact that $Q_n(\theta)\rightarrow Q(\theta)$ for all $\theta \in (q_{min}-\delta, q_{min}+\delta)$ except possibly $\theta=q_{min}=\tilde{\theta}$. Now assume $q=\tilde{\theta}$. We have $Q_n(q_n)=e^{\C_nq_n}Ee^{(\frac{1}{2}\C_nq_n-\frac{1}{2}\A_n)Z_n}<Q_n(\theta)$ for all $\theta \in (-\tha_n, \tha_n)$. When $q_n\rightarrow q=-\frac{\A}{\C}$ we have $\frac{1}{2}\C_nq_n-\frac{1}{2}\A_n\rightarrow 0$. Therefore the random variables $(\frac{1}{2}\C_nq_n-\frac{1}{2}\A_n)Z_n$ converge almost surely to zero. We apply Fatou's lemma to the expression of $Q_n(q_n)$ above and obtain 
\[
Q(q)=e^{\C q}\le \liminf_{n}Q_n(q_n)\le \liminf_{n}Q_n(\theta)=Q(\theta), \; \; \forall \theta \in \bar{\Theta}.
\]
This implies $Q(q)\le Q(q_{min})$ and this contradicts with $Q(q_{min})<Q(q)$.
\vspace{0.1in}

Case 2: Assume $\s=0$. In this case $\tha=\sqrt{\frac{\A}{\C}}$ and $0\le \eta(\theta)\le \frac{1}{2}\A$ for $\theta \in [-\tha, 0]$. We have $\eta(\theta)>0$ for all $\theta\in (-\tha, 0]$. Therefore $Q_n(\theta)\rightarrow Q(\theta)$ for all $\theta\in (-\tha, 0]$.  If $q\in (-\tha, 0]$ then we have $Q_n(q_n)\rightarrow Q(q)>Q(q_{min})$. This contradicts with the fact that $Q_n(\theta)$ converges to $Q(\theta)$ in the neighbourhood of $q_{min}\in (-\tha, 0)$. Now assume $q=-\tha$. We have $Q_n(q_n)=e^{\C_nq_n}Ee^{(\frac{1}{2}\C_nq_n-\frac{1}{2}\A_n)Z_n}$ and when $q_n\rightarrow -\tha$ we have $\frac{1}{2}\C_nq_n-\frac{1}{2}\A_n\rightarrow 0$. Therefore $(\frac{1}{2}\C_nq_n-\frac{1}{2}\A_n)Z_n$ converge almost surely to zero. Again we apply Fatou's lemma and obtain
\[
Q(-\tha)=e^{-\C \tha}\le \liminf_nQ_n(q_n)\le \liminf_nQ_n(\theta)=Q(\theta), \; \forall \theta \in (-\tha, 0].
\]
This implies $Q(-\tha)\le Q(q_{min})$. But $Q(q_{min})<Q(-\tha)$ due to strict convexity of $Q$
in $[-\tha, 0]$ (note here that $\LL_Z(\s)<\infty$).

\end{proof}

\bibliographystyle{plainnat}

\bibliography{main}

\vspace{0.2in}

\end{document}